\documentclass{sig-alternate-05-2015}

\usepackage{cite}
\usepackage{graphicx}
\usepackage{caption}
\usepackage{subcaption}
\usepackage{amsthm}
\theoremstyle{plain}

\usepackage{amssymb} 
\usepackage{amsmath} 
\usepackage{mathtools}
\usepackage{mathrsfs}
\usepackage{tikz}
\makeatother

\newtheorem{corr}{Corollary}
\newtheorem{lem}{Lemma}
\newtheorem{rem}{Remark}
\newtheorem{prop}{Proposition}
\newtheorem{prob}{Problem}
\newtheorem{thm}{Theorem}

\newenvironment{customthm}[1]
  {\innercustomthm}
  {\endinnercustomthm}




\usepackage{xspace}
\usepackage{amsmath,amssymb,amsfonts}



\newcommand{\bs}[1]{\ensuremath{\boldsymbol{#1}}}



\newcommand{\ba}{\ensuremath{\bs a}\xspace}

\newcommand{\bv}{\ensuremath{\bs v}\xspace}
\newcommand{\bw}{\ensuremath{\bs w}\xspace}
\newcommand{\bx}{\ensuremath{\bs x}\xspace}

\newcommand{\bz}{\ensuremath{\bs z}\xspace}


\newcommand{\bW}{\ensuremath{\bs W}\xspace}




\newcommand{\bxi}{\ensuremath{\bs \xi}\xspace}

















\newcommand{\pci}[1]{\ensuremath{\bs{p}_{\boldsymbol{c}_{i}}}\xspace}





















\newcommand{\Prob}{\mathbb{P}}
\newcommand{\Exp}{\mathbb{E}}
\newcommand{\FSRset}{\mathrm{FSReach}}
\newcommand{\probref}[1]{{#1}}

\begin{document}

\CopyrightYear{2017}
\setcopyright{acmcopyright}
\conferenceinfo{HSCC '17,}{April 18--20, 2017, Pittsburgh, PA, USA}
\isbn{978-1-4503-4590-3/17/04}\acmPrice{\$15.00}
\doi{http://dx.doi.org/10.1145/3049797.3049818}

\clubpenalty=10000
\widowpenalty = 10000

\title{Forward Stochastic Reachability Analysis for Uncontrolled Linear Systems using Fourier Transforms}
\numberofauthors{3} 
\author{
\alignauthor 
Abraham P. Vinod\\
    \affaddr{Electrical \& Comp. Eng.}\\
    \affaddr{University of New Mexico}\\
    \affaddr{Albuquerque, NM 87131, USA}\\
    \email{aby.vinod@gmail.com}
\alignauthor 
{Baisravan HomChaudhuri}\\
    \affaddr{Electrical \& Comp. Eng.}\\
    \affaddr{University of New Mexico}\\
    \affaddr{Albuquerque, NM 87131, USA}\\
    \email{baisravan.hc@gmail.com}
\alignauthor 
Meeko M. K. Oishi\\
    \affaddr{Electrical \& Comp. Eng.}\\
    \affaddr{University of New Mexico}\\
    \affaddr{Albuquerque, NM 87131, USA}\\
    \email{oishi@unm.edu}
}
\date{}
\maketitle
\begin{abstract}
We propose a scalable method for forward stochastic reachability analysis for uncontrolled linear systems with affine disturbance. Our method uses Fourier transforms to efficiently compute the forward stochastic reach probability measure (density) and the forward stochastic reach set. This method is applicable to systems with bounded or unbounded disturbance sets.  We also examine the convexity properties of the forward stochastic reach set and its probability density. Motivated by the problem of a robot attempting to capture a stochastically moving, non-adversarial target, we demonstrate our method on two simple examples. Where traditional approaches provide approximations, our method provides exact analytical expressions for the densities and probability of capture.
\end{abstract}

\begin{CCSXML}
<ccs2012>
<concept>
<concept_id>10003752.10003809.10003716.10011138.10010046</concept_id>
<concept_desc>Theory of computation~Stochastic control and optimization</concept_desc>
<concept_significance>500</concept_significance>
</concept>
<concept>
<concept_id>10003752.10003809.10003716.10011138.10010043</concept_id>
<concept_desc>Theory of computation~Convex optimization</concept_desc>
<concept_significance>300</concept_significance>
</concept>
<concept>
<concept_id>10010147.10010178.10010213</concept_id>
<concept_desc>Computing methodologies~Control methods</concept_desc>
<concept_significance>300</concept_significance>
</concept>
<concept>
<concept_id>10010147.10010178.10010213.10010214</concept_id>
<concept_desc>Computing methodologies~Computational control theory</concept_desc>
<concept_significance>100</concept_significance>
</concept>
\end{CCSXML}

\ccsdesc[500]{Theory of computation~Stochastic control and optimization}
\ccsdesc[300]{Theory of computation~Convex optimization}
\ccsdesc[300]{Computing methodologies~Control methods}
\ccsdesc[100]{Computing methodologies~Computational control theory}
%
\printccsdesc

\keywords{Stochastic reachability; Fourier transform; Convex optimization}

\section{Introduction}
\label{sec:introduction}

Reachability analysis of discrete-time dynamical systems with stochastic
disturbance input is an established tool to provide probabilistic assurances of
safety or performance and has been applied in several domains, including motion
planning in robotics~\cite{Baisravan2017ACC,malone2014stochastic}, spacecraft
docking~\cite{lesser_stochastic_2013}, fishery management and mathematical
finance~\cite{summers_verification_2010}, and autonomous
survelliance~\cite{kariotoglou2011stochastic}.  The computation of stochastic
reachable and viable sets has been formulated within a dynamic programming
framework \cite{Abate2008,summers_verification_2010} that generalizes to
discrete-time stochastic hybrid systems, and suffers from the well-known curse
of dimensionality \cite{abate_computational_2007}.  Recent work in computing
stochastic reachable and viable sets aims to circumvent these computational
challenges, through approximate dynamic programming
\cite{kariotoglou2013ECC,Krtgl2016, manganini_policy_2015}, Gaussian mixtures
\cite{Krtgl2016}, particle filters \cite{manganini_policy_2015,
lesser_stochastic_2013}, and convex chance-constrained optimization
\cite{lesser_stochastic_2013, kariotoglou2011stochastic}.  These methods have
been applied to systems that are at most 6-dimensional \cite{kariotoglou2013ECC}
-- far beyond the scope of what is possible with dynamic programming, but are
not scalable to larger and more realistic scenarios.

We focus in particular on the forward stochastic reachable set, defined as the
smallest closed set that covers all the reachable states.  For LTI systems with
bounded disturbances, established verification methods
\cite{kvasnica2015reachability, ellipsoid, girard2005reachability} can be
adapted to overapproximate the forward stochastic reachable set.  However, these
methods return a trivial result with unbounded disturbances and do not address
the forward stochastic reach probability measure, which provides the likelihood
of reaching a given set of states.

We present a scalable method to perform forward stochastic reachability analysis
of LTI systems with stochastic dynamics, that is, a method to compute the
forward stochastic reachable set as well as its probability measure.  We show
that Fourier transforms can be used to provide exact reachability analysis, for
systems with bounded or unbounded disturbances.  We provide both iterative and
analytical expressions for the probability density, and show that explicit
expressions can be derived in some cases.

We are motivated by a particular application: pursuit of a dynamic,
non-adversarial target \cite{hollinger2009efficient}.  Such a scenario may arise
in e.g., the rescue of a lost first responder in a building on fire
\cite{kumar2004robot}, capture of a non-aggressive UAV in an urban
environment~\cite{geyer2008active}, or other non-antagonistic situations.
Solutions for an adversarial target, based in a two-person, zero-sum
differential game, can accommodate bounded disturbances with unknown
stochasticity \cite{mitchell2000level, tomlin2000game,
tomlin_computational_2003,
bokanowski_reachability_2010,Huang2015AutomationAssistedCA}, but will be
conservative for a non-adversarial target. 
We seek scalable solutions that synthesize an optimal
controller for the non-adversarial scenario, by exploiting the forward reachable
set and probability measure for the target.  We analyze the convexity properties
of the forward stochastic reach probability density and sets, and propose a
convex optimization problem to provide the exact probabilistic guarantee of
success and the corresponding optimal controller.

The main contributions of this paper are: 1) a method to efficiently compute the
forward stochastic reach sets and the corresponding probability measure for
linear systems with uncertainty using Fourier transforms, 2) the convexity
properties of the forward stochastic reach probability measure and sets, and 3)
a convex formulation to maximize the probability of capture of a non-adversarial
target with stochastic dynamics using the forward stochastic
reachability analysis. 

The paper is organized as follows: We define the forward stochastic reachability
problem and review some properties from probability theory and Fourier analysis
in Section~\ref{sec:preliminaries}.  Section~\ref{sec:computeReach} formulates
the forward stochastic reachability analysis for linear systems using Fourier
transforms and provides convexity results for the probability measure and the
stochastic reachable set. We apply the proposed method to solve the controller
synthesis problem in Section~\ref{sec:stochTarget}, and provide conclusions and
directions for future work in Section~\ref{sec:conc}.

\section{Preliminaries and Problem Formulation}
\label{sec:preliminaries}

In this section, we review some properties from probability theory and Fourier
analysis relevant for our discussion and setup the problems. For detailed
discussions on probability theory, see~\cite{billingsley_probability_1995,
gubner_probability_2006,cramer2016mathematical,dharmadhikari1988unimodality},
and on Fourier analysis, see~\cite{stein1971introduction}. We denote random
vectors with bold case and non-random vectors with an overline.

\subsection{Preliminaries}
\label{sub:prelim}

A random vector $\bw\in \mathbb{R}^{p}$ is defined in a probability
space  $(\mathcal{W},\sigma( \mathcal{W}), \Prob_{\bw})$.
Given a sample space $ \mathcal{W}$, the sigma-algebra $\sigma( \mathcal{W})$
provides a collection of measurable sets defined over $ \mathcal{W}$. The sample
space can be either countable (discrete random vector $\bw$) or uncountable
(continuous random vector $\bw$). In this paper, we focus only on absolutely
continuous random variables.  For an absolutely continuous random vector, the
probability measure defines a probability density function $\psi_{\bw}:
\mathbb{R}^p \rightarrow \mathbb{R}$ such that given a (Borel) set $
\mathcal{B}\in \sigma(\mathcal{W})$, we have $\Prob_{\bw}\{\bw\in
\mathcal{B}\}=\int_{ \mathcal{B}} \psi_{\bw}(\bar{z})d\bar{z}$. Here, $d\bar{z}$
is short for $dz_1dz_2\ldots dz_p$.  

We will use the concept of support to define the forward stochastic reach set.
The support of a random vector is the smallest closed set that will occur almost
surely. Formally, the support of a random vector $\bw$ is a unique
\emph{minimal} closed set $ \mathrm{supp}(\bw)\in\sigma( \mathcal{W})$ such that
1) $ \Prob_{\bw}\left\{ \bw\in \mathrm{supp}(\bw)\right\}=1$, and 2) if $
\mathcal{D} \in \sigma( \mathcal{W})$ such that $\Prob_{\bw}\left\{  \bw\in
\mathcal{D}\right\}=1$, then $ \mathrm{supp}(\bw)\subseteq
\mathcal{D}$~\cite[Section 10, Ex.  12.9]{billingsley_probability_1995}.
Alternatively, denoting the Euclidean ball of radius $\delta$ centered at
$\bar{z}$ as $ \mathrm{Ball}(\bar{z},\delta)$, we have \eqref{eq:support_def}
which is equivalent to \eqref{eq:support_def2} via~\cite[Proposition
19.3.2]{tao_analysis2},
\begin{align}
    \mathrm{supp}(\bw)&=\left\{\bar{z}\in \mathcal{W}\vert\forall \delta>0,
\int_{
\mathrm{Ball}(\bar{z},\delta)}\psi_{\bw}(\bar{z})d\bar{z}>0\right\}\label{eq:support_def}\\
    &= \mathcal{W}\setminus\left\{\bar{z}\in \mathcal{W}\vert\exists \delta>0,
\psi_{\bw}(\bar{z})=0\mbox{ a.e. in
}\mathrm{Ball}(\bar{z},\delta)\right\}\label{eq:support_def2}
\end{align}
For a continuous $\psi_{\bw}$, \eqref{eq:support_def2} is the support of the
density~\cite[Section 8.8]{AnalysisJean}. Denoting the closure of a set using $
\mathrm{cl}(\cdot)$,
\begin{align}
    \mathrm{supp}(\bw)&=\mathrm{support}(\psi_{\bw})= \mathrm{cl}(\{\bar{z}\in \mathcal{W}\vert
\psi_{\bw}(\bar{z})>0\}).\label{eq:support_def3}
\end{align}

The characteristic function (CF) of a random vector
$\bw\in \mathbb{R}^{p}$ with probability density function
$\psi_{\bw}(\bar{z})$ is
\begin{align}
    \Psi_{\bw}(\bar{\alpha})&\triangleq
    \Exp_{\bw}\left[\mathrm{exp}\left({j\bar{\alpha}^\top\bw}\right)\right] \nonumber \\
                       &=\int_{\mathbb{R}^p}e^{j\bar{\alpha}^\top\bar{z}}
    \psi_{\bw}(\bar{z})d\bar{z}=
    \mathscr{F}\left\{\psi_{\bw}(\cdot)\right\}(-\bar{\alpha})\label{eq:cfun_def}
\end{align}
where $ \mathscr{F}\{\cdot\}$ denotes the Fourier transformation operator and
$\bar{\alpha}\in \mathbb{R}^{p}$. Given a CF
$\Psi_{\bw}(\bar{\alpha})$, the density function can be computed as
\begin{align}
    \psi_{\bw}(\bar{z})&=
    \mathscr{F}^{-1}\left\{\Psi_{\bw}(\cdot)\right\}(-\bar{z})
    \nonumber \\
    &={\left(\frac{1}{2\pi}\right)}^p\int_{
\mathbb{R}^p}e^{-j\bar{\alpha}^\top\bar{z}}
\Psi_{\bw}(\bar{\alpha})d\bar{\alpha}\label{eq:cfun_ift}
\end{align}
where $ \mathscr{F}^{-1}\{\cdot\}$ denotes the inverse Fourier transformation
operator and $d\bar{\alpha}$ is short for $d\alpha_1d\alpha_2\ldots d\alpha_p$.

We define the $L^d( \mathbb{R}^p)$ spaces, $1\leq d< \infty$, of measurable
real-valued functions with finite $L^d$ norm. The $L^d$ norm of a 
density $\psi_{\bw}$ is ${\Vert \psi_{\bw}\Vert}_d\triangleq {\left(\int_{
\mathbb{R}^d}{\vert\psi_{\bw}(\bar{z})\vert}^dd\bar{z}\right)}^{1/d}$ where
$\vert \cdot\vert$ denotes the absolute value. Here, $L^1( \mathbb{R}^p)$ is the
space of absolutely integrable functions, and $L^2( \mathbb{R}^p)$ is the space
of square-integrable functions. 
The Fourier transformation is defined for all functions in
$L^1( \mathbb{R}^p)$ and all functions in $L^2( \mathbb{R}^p)$. 
Since probability densities
are, by definition, in $L^1( \mathbb{R}^p)$, CFs exist for every
probability density~\cite[Section 1]{stein1971introduction}. 
Let $\bw_1,\bw_2\in \mathbb{R}^p$ be random vectors with densities
$\psi_{\bw_1}$ and $\psi_{\bw_2}$ and CFs $\Psi_{\bw_1}$
and $\Psi_{\bw_2}$ respectively. By definition, $\psi_{\bw_1},\psi_{\bw_2}\in
L^1( \mathbb{R}^p)$.  Let
$\bar{z},\bar{z}_1,\bar{z}_2,\bar{\alpha},\bar{\alpha}_1,\bar{\alpha}_2\in
\mathbb{R}^p,\bar{\beta}\in \mathbb{R}^n$.
\begin{enumerate}
    \item[P1)]  If $\bx=\bw_1+\bw_2$, then
        $\psi_{\bx}(\bar{z})=\big(\psi_{\bw_1}(\cdot)\ast\psi_{\bw_2}(\cdot)\big)(\bar{z})$
        and $\Psi_{\bx}(\bar{\alpha}) = \Psi_{\bw_1}(\bar{\alpha})
        \Psi_{\bw_2}(\bar{\alpha})$~\cite[Section
        21.11]{cramer2016mathematical}. Also,
        $ \mathrm{supp}(\bx) \subseteq
        \mathrm{cl}(\mathrm{supp}(\bw_1) \oplus \mathrm{supp}(\bw_2))
        $~\cite[Lemma 8.15]{AnalysisJean}.  Here, $\ast$
        denotes convolution and $\oplus$ Minkowski sum.
    \item[P2)] If $\bx=F\bw_1+G$ where $F\in \mathbb{R}^{p\times n},G\in
        \mathbb{R}^{n}$ are matrices,
        $\Psi_{\bx}(\bar{\beta})=\mathrm{exp}\left({j\bar{\beta}^\top
        G}\right)\Psi_{\bw_1}(F^\top\bar{\beta})$ (from~\cite[Section
        22.6]{cramer2016mathematical} and~\cite[Equation
        1.5]{stein1971introduction}).
    \item[P3)] If $\bw_1$ and $\bw_2$ are independent vectors, then
        $\bx={[\bw_1^\top\ \bw_2^\top]}^\top$ has probability density
        $\psi_{\bx}(\bar{y})=\psi_{\bw_1}(\bar{z}_1)\psi_{\bw_2}(\bar{z}_2),
        \bar{y}={[\bar{z}_1^\top\ \bar{z}_2^\top]}^\top\in \mathbb{R}^{2p}$ and
        CF
        $\Psi_{\bx}(\bar{\gamma})=\Psi_{\bw_1}(\bar{\alpha}_1)\Psi_{\bw_2}(\bar{\alpha}_2),
        \bar{\gamma}={[\bar{\alpha}_1^\top\ \bar{\alpha}_2^\top]}^\top\in
        \mathbb{R}^{2p}$~\cite[Section 22.4]{cramer2016mathematical}.
    \item[P4)] The marginal probability density of any group of $k$ components
        selected from the random vector $\bw_1$ is obtained by setting
        the remaining $p-k$ Fourier variables in
        the CF to zero~\cite[Section
        22.4]{cramer2016mathematical}.
\end{enumerate}

An additional assumption of square-integrability of the probability density of
the random variable $\bw_3\in \mathbb{R}^p$ results in $\psi_{\bw_3}\in L^1(
\mathbb{R}^p)\cap L^2( \mathbb{R}^p)$. Along with Properties P1-P4,
$\psi_{\bw_3}$ satisfies the following property:
\begin{enumerate}
    \item[P5)] The Fourier transform preserves the inner product in $L^2(
        \mathbb{R}^p)$~\cite[Theorem 2.3]{stein1971introduction}. Given a
        square-integrable function $h(\bar{z})$ with Fourier transform
        $H(\bar{\alpha})=\mathscr{F}\left\{h(\cdot)\right\}(\bar{\alpha})$ and a
        square-integrable probability density $\psi_{\bw_3}$,
        \begin{align}
            \int_{ \mathbb{R}^p}{\psi_{\bw_3}(\bar{z})}^\dagger
            h(\bar{z})d\bar{z}={\left(\frac{1}{2\pi}\right)}^p
            \int_{
        \mathbb{R}^p}&{\big(\mathscr{F}\left\{\psi_{\bw_3}(\cdot)\right\}(\bar{\alpha})\big)}^\dagger \nonumber \\
        & \quad \times H(\bar{\alpha})d\bar{\alpha}\nonumber
        \end{align}
        Here, $\dagger$ denotes complex conjugation.
\end{enumerate}
\begin{lem}\label{lem:capturePr}
    For square-integrable $\psi_{\bw_3}$  and $h$,
        \begin{align}
            \int_{ \mathbb{R}^p}\psi_{\bw_3}(\bar{z})
            h(\bar{z})d\bar{z}&={\left(\frac{1}{2\pi}\right)}^p\int_{
        \mathbb{R}^p}\Psi_{\bw_3}(\bar{\alpha})
        H(\bar{\alpha})d\bar{\alpha}.\label{eq:cfun_capturePr}
        \end{align}
\end{lem}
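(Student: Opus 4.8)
The plan is to deduce \eqref{eq:cfun_capturePr} directly from Property~P5, the only extra ingredient being that a probability density is a real-valued function. So I would not re-derive any Plancherel-type identity from scratch; instead I would massage the conjugates appearing in P5 until both sides collapse onto the statement of the lemma.

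The first observation is that $\psi_{\bw_3}$, being a probability density, is real-valued (and nonnegative) almost everywhere, hence $\psi_{\bw_3}(\bar{z})^\dagger = \psi_{\bw_3}(\bar{z})$. Therefore the left-hand side of the identity in P5, namely $\int_{\mathbb{R}^p}\psi_{\bw_3}(\bar{z})^\dagger h(\bar{z})\,d\bar{z}$, is literally $\int_{\mathbb{R}^p}\psi_{\bw_3}(\bar{z}) h(\bar{z})\,d\bar{z}$, the left-hand side of \eqref{eq:cfun_capturePr}. The second observation rewrites the conjugated Fourier transform on the right-hand side of P5 in terms of the characteristic function. With the sign convention fixed in \eqref{eq:cfun_def} one has $\mathscr{F}\{\psi_{\bw_3}(\cdot)\}(\bar{\alpha}) = \int_{\mathbb{R}^p} e^{-j\bar{\alpha}^\top\bar{z}}\psi_{\bw_3}(\bar{z})\,d\bar{z}$; taking complex conjugates and using that $\psi_{\bw_3}$ is real moves the conjugation onto the exponential alone, so $\big(\mathscr{F}\{\psi_{\bw_3}(\cdot)\}(\bar{\alpha})\big)^\dagger = \int_{\mathbb{R}^p} e^{j\bar{\alpha}^\top\bar{z}}\psi_{\bw_3}(\bar{z})\,d\bar{z} = \Psi_{\bw_3}(\bar{\alpha})$, again by \eqref{eq:cfun_def} (equivalently, $\Psi_{\bw_3}(\bar{\alpha}) = \mathscr{F}\{\psi_{\bw_3}\}(-\bar{\alpha})$ and, for real $\psi_{\bw_3}$, the conjugate of $\mathscr{F}\{\psi_{\bw_3}\}(\bar{\alpha})$ is $\mathscr{F}\{\psi_{\bw_3}\}(-\bar{\alpha})$).

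Substituting these two identities into the equality asserted by P5 turns its right-hand side into ${\left(\frac{1}{2\pi}\right)}^p\int_{\mathbb{R}^p}\Psi_{\bw_3}(\bar{\alpha}) H(\bar{\alpha})\,d\bar{\alpha}$, which is exactly \eqref{eq:cfun_capturePr}, completing the argument. There is no genuinely hard step here: the integrability and square-integrability hypotheses are already what P5 needs, so no convergence issue is introduced. The only points demanding care are bookkeeping the Fourier sign convention from \eqref{eq:cfun_def} consistently, and noting that it is the realness of $\psi_{\bw_3}$ — not of $h$ — that makes the two $\dagger$'s in P5 disappear, so the lemma holds for complex-valued $h$ as well.
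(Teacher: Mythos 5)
Your argument is correct and is essentially the paper's own proof: both apply Property~P5 and then use the realness of the probability density to conclude $\psi_{\bw_3}^\dagger=\psi_{\bw_3}$ and $\big(\mathscr{F}\{\psi_{\bw_3}(\cdot)\}(\bar{\alpha})\big)^\dagger=\Psi_{\bw_3}(\bar{\alpha})$. You simply spell out the sign-convention bookkeeping that the paper leaves implicit, so there is nothing to add.
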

\begin{proof}
Follows from Property P5, \eqref{eq:cfun_ift}, and ~\cite[Section
10.6]{cramer2016mathematical}. Since probability densities are real functions,
${\big(\psi_{\bw_3}(\bar{z})\big)}^\dagger={\psi_{\bw_3}(\bar{z})}$ and
${\big(\mathscr{F}\left\{\psi_{\bw_3}(\cdot)\right\}(\bar{\alpha})\big)}^\dagger
=\Psi_{\bw_3}(\bar{\alpha})$.
\end{proof}


\subsection{Problem formulation}
\label{sub:problem_formulation}

Consider the discrete-time linear time-invariant system,
\begin{align}
    \bx[t+1]&=A\bx[t]+B\bw[t]\label{eq:sys_orig}
\end{align}
with state $\bx[t]\in \mathcal{X}\subseteq \mathbb{R}^n$, disturbance $w[t]\in
\mathcal{W}\subseteq \mathbb{R}^p$, and matrices $A,B$ of appropriate
dimensions. Let $\bar{x}_0\in \mathcal{X}$  be the given initial state and $T$
be the finite time horizon. 
The disturbance set $\mathcal{W}$ is an uncountable set which can be
either bounded or unbounded, and the random vector $\bw[t]$ is defined in a
probability space $(\mathcal{W},\sigma( \mathcal{W}), \Prob_{\bw})$.  The random
vector $\bw[t]$ is assumed to be absolutely continuous with a known density function
$\psi_{\bw}$.  The disturbance process $\bw[\cdot]$ is assumed to be a random
process with an independent and identical distribution (IID). 

The dynamics in \eqref{eq:sys_orig} are quite general and includes affine noise
perturbed LTI discrete-time systems with known state-feedback based inputs.  An
additional affine term in \eqref{eq:sys_orig} can include affine noise perturbed
LTI discrete-time systems with known open-loop controllers.  For time
$\tau\in[1,T]$,
\begin{align} 
    \bx[\tau]&=A^\tau\bar{x}_0+ \mathscr{C}_{n\times (\tau p)} \bW\label{eq:lin_traj}
\end{align} 
with $\mathscr{C}_{n\times (\tau p)}=[B\ AB\ A^2B\ \ldots\ A^{\tau-1}B]\in
\mathbb{R}^{n\times (\tau p)}$ and $\bW={[\bw^\top[\tau-1]\ \bw^\top[\tau-2]\
\ldots\ \bw^\top[0]]}^\top$ as a random vector defined by the sequence of random
vectors $\{\bw[t]\}_{t=0}^{t=\tau-1}$.   For any given $\tau$, the random vector
$\bW$ is defined in the product space $(\mathcal{W}^\tau,
\sigma(\mathcal{W}^\tau), \Prob_{\bW})$ where $\mathcal{W}^\tau=
\bigtimes_{t=0}^\tau\mathcal{W}$ and $\Prob_{\bW}=\prod_{t=0}^\tau \Prob_{\bw}$,
$\psi_{\bW}=\prod_{t=0}^\tau \psi_{\bw}$ by the IID assumption of the random
process $\bw[\cdot]$.  From \eqref{eq:lin_traj}, the state $\bx[\cdot]$ is a
random process with the random vector at each instant $\bx[t]$ defined in the
probability space $(\mathcal{X}, \sigma(\mathcal{X}),
\Prob_{\bx}^{t,\bar{x}_0})$ where the probability measure
$\Prob_{\bx}^{t,\bar{x}_0}$ is induced from $\Prob_{\bW}$.  We denote the random
process originating from $\bar{x}_0$ as $\bxi[\cdot;\bar{x}_0]$ where for all
$t$, $\bxi[t;\bar{x}_0]=\bx[t]$, and let $\bar{Z}={[\bar{z}^\top[\tau-1]\
        \bar{z}^\top[\tau-2]\ \ldots\ \bar{z}^\top[0]]}^\top\in \mathbb{R}^{\tau
        p}$ denote a realization of the random vector $\bW$. 

An iterative method for the forward stochastic reachability analysis (FSR
analysis) is given in \cite{Baisravan2017ACC}~\cite[Section
10.5]{lasota2013chaos}.  However, for systems perturbed by continuous random
variables, the numerical implementation of the iterative approach becomes
erroneous for larger time instants due to the iterative numerical evaluation of
improper integrals, motivating the need for an alternative implementable
approach.

\begin{prob}
    Given the dynamics \eqref{eq:sys_orig} with initial state $\bar{x}_0$, 
   construct analytical expressions at time instant $\tau$ for
   \begin{enumerate}
   \item the smallest closed set that covers all the reachable states
       (i.e., the forward stochastic reach set), and
   \item the probability measure over the forward stochastic reach set (i.e.,
       the forward stochastic reach probability measure) 
   \end{enumerate}
that do not require an iterative approach.\label{prob:basic} 
\end{prob}

We are additionally interested in applying the forward stochastic reachable set
(FSR set) and probability measure (FSRPM) to the problem of capturing a
non-adversarial target.  Specifically, we seek a convex formulation to the
problem of capturing a non-adversarial target. This requires convexity of the
FSR set and concavity of the objective function defined on the probability of
successful capture. 

\begin{prob}\label{prob:stochtarget}
For a finite time horizon, find a) a convex formulation for the maximization of the probability of capture of
a non-adversarial target with known
stochastic dynamics and initial state, and b) the resulting optimal controller that a deterministic robot must employ when there is a non-zero probability of capture.

\end{prob}

\begin{customthm}{2.a}
   Characterize the sufficient conditions for log-concavity of the FSRPM and
   convexity of the FSR set.\label{prob:suff_cond}
\end{customthm}



\section{Forward stochastic reachability analysis}
\label{sec:computeReach}

The existence of forward stochastic reach probability density (FSRPD) for systems of the form \eqref{eq:sys_orig} has
been demonstrated in~\cite[Section 10.5]{lasota2013chaos}.
For any $\tau\in[1,T]$, the probability of the state reaching a set $ \mathcal{S}\in\sigma(
\mathcal{X})$ at time $\tau$ starting at $\bar{x}_0$ is
defined using the FSRPM $\Prob_{\bx}^{\tau,\bar{x}_0}$, 
\begin{align}
    \Prob_{\bx}^{\tau,\bar{x}_0}\{\bx[\tau]\in
    \mathcal{S}\}&=\int_\mathcal{S}\psi_{\bx}(\bar{y};\tau,\bar{x}_0)d\bar{y},\quad\bar{y}\in
        \mathbb{R}^n.\label{eq:psi_def}
\end{align}
Since the disturbance set $ \mathcal{W}$ is uncountable, we focus on the
computation of the FSRPD $\psi_{\bx}$, and use \eqref{eq:psi_def} to link it to
the FSRPM.  We have discussed the countable case in~\cite{Baisravan2017ACC}.

We define the forward stochastic reach set (FSR set) as the support of the
random vector $\bxi[\tau;\bar{x}_0]=\bx[\tau]$ at $\tau\in[1,T]$ when the
initial condition is $\bar{x}_0\in \mathcal{X}$. From \eqref{eq:support_def3},
for a continuous FSRPD,
\begin{align}
    \FSRset(\tau, \bar{x}_0)&=\mathrm{cl}(\{\bar{y}\in
\mathcal{X}\vert\psi_{\bx}(\bar{y};\tau, \bar{x}_0)> 0\})\subseteq \mathcal{X}.\label{eq:FSRset}
\end{align}
\begin{lem}\label{lem:FSRmembership}
    $\mathcal{S}\cap \FSRset(\tau, \bar{x}_0)= \emptyset  \Rightarrow 
    \Prob_{\bx}^{\tau,\bar{x}_0}\{\bx[\tau]\in
    \mathcal{S}\} =0.$
\end{lem}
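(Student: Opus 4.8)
The plan is to unwind the definition of the FSR set as a support and invoke the defining property that a support carries full probability mass. Everything is set up so that the statement follows by monotonicity of the measure.

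First I would recall from \eqref{eq:FSRset} that, for a continuous FSRPD, $\FSRset(\tau,\bar{x}_0)=\mathrm{cl}(\{\bar{y}\in\mathcal{X}\mid\psi_{\bx}(\bar{y};\tau,\bar{x}_0)>0\})$; by \eqref{eq:support_def3} this is precisely the support $\mathrm{supp}(\bxi[\tau;\bar{x}_0])$ of the random vector $\bx[\tau]$. Hence, by property 1) of the support recalled in Section~\ref{sub:prelim} (equivalently \cite[Section 10, Ex. 12.9]{billingsley_probability_1995}), we have $\Prob_{\bx}^{\tau,\bar{x}_0}\{\bx[\tau]\in\FSRset(\tau,\bar{x}_0)\}=1$.

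Then I would assume $\mathcal{S}\cap\FSRset(\tau,\bar{x}_0)=\emptyset$, so that $\mathcal{S}\subseteq\mathcal{X}\setminus\FSRset(\tau,\bar{x}_0)$, and conclude by monotonicity of $\Prob_{\bx}^{\tau,\bar{x}_0}$:
\begin{align}
    \Prob_{\bx}^{\tau,\bar{x}_0}\{\bx[\tau]\in\mathcal{S}\}
    &\leq \Prob_{\bx}^{\tau,\bar{x}_0}\{\bx[\tau]\in\mathcal{X}\setminus\FSRset(\tau,\bar{x}_0)\} \nonumber\\
    &= 1-\Prob_{\bx}^{\tau,\bar{x}_0}\{\bx[\tau]\in\FSRset(\tau,\bar{x}_0)\}=0. \nonumber
\end{align}
An alternative, equivalent route is to argue directly with the density: since $\FSRset(\tau,\bar{x}_0)\supseteq\{\bar{y}\mid\psi_{\bx}(\bar{y};\tau,\bar{x}_0)>0\}$, disjointness forces $\psi_{\bx}(\bar{y};\tau,\bar{x}_0)=0$ for every $\bar{y}\in\mathcal{S}$, so \eqref{eq:psi_def} gives $\int_{\mathcal{S}}\psi_{\bx}(\bar{y};\tau,\bar{x}_0)\,d\bar{y}=0$.

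There is essentially no real obstacle here; the only point deserving care is the identification of the geometric definition \eqref{eq:FSRset} with the probabilistic notion of support, so that the full-measure property may be applied. This identification is immediate from \eqref{eq:support_def3} under the standing assumption (used already in \eqref{eq:FSRset}) that the FSRPD is continuous, so the lemma is a one-line consequence of the definitions.
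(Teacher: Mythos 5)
Your proof is correct and follows essentially the same route as the paper, which simply cites the support characterization \eqref{eq:support_def2}: off the support the density vanishes (a.e.), so the integral \eqref{eq:psi_def} over any disjoint measurable set is zero. Your first route via the full-measure property of the support plus monotonicity is just an equivalent rephrasing of that fact, and your density-based alternative is precisely the paper's one-line argument spelled out.
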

\begin{proof}
Follows from \eqref{eq:support_def2}.
\end{proof}
Note that when the disturbance set $\mathcal{W}$ is unbounded, the definition of
the FSR set \eqref{eq:FSRset} might trivially become $\mathbb{R}^n$.
Also, for uncountable $ \mathcal{W}$, the probability of the state taking a
particular value is zero, and therefore, the superlevel sets of the FSRPD do not
have the same interpretation as in the countable case~\cite{Baisravan2017ACC}.
However, given the FSRPD, we can obtain the likelihood that the state of
\eqref{eq:sys_orig} will reach a particular set of interest via
\eqref{eq:psi_def} and the FSR set via \eqref{eq:FSRset}.

\subsection{Iterative method for reachability analysis}
\label{sub:iter}

We extend the iterative approach for the FSR analysis proposed
in~\cite{Baisravan2017ACC} for a nonlinear discrete-time systems with discrete
random variables to a linear discrete-time system with continuous random
variables.  This discussion, inspired in part by~\cite[Section
10.5]{lasota2013chaos}, helps to develop proofs presented later.

Assume that the system matrix $A$ of \eqref{eq:sys_orig} is invertible. This
assumption holds for continuous-time systems which have been discretized via
Euler method. For $\tau\in[0,T-1]$, we have from \eqref{eq:sys_orig} and
Property P1, 
\begin{align}
    \psi_{\bx}(\bar{y};\tau+1,\bar{x}_0)&=\big(\psi_{A\bx}(\cdot;\tau,\bar{x}_0)\ast\psi_{B\bw}(\cdot)\big)(\bar{y})\label{eq:recurs_FSRPD}
\end{align}
with
\begin{align}
    \psi_{A\bx}(\bar{y};\tau,\bar{x}_0)&={\vert A
\vert}^{-1}\psi_{\bx}(A^{-1}\bar{y};\tau,\bar{x}_0)\label{eq:probTransform}
\end{align}
$\psi_{A\bx}(\bar{y};\tau,\bar{x}_0)={(\det{A})}^{-1}\psi_{\bx}(A^{-1}\bar{y};\tau,\bar{x}_0)$
from~\cite[Example 8.9]{gubner_probability_2006} for $\tau\geq 1$,
$\psi_{A\bx}(\bar{y};0,\bar{x}_0)=\delta(\bar{y}-A\bar{x}_0)$ where
$\delta(\cdot)$ is the Dirac-delta function~\cite[Chapter
5]{bracewell_fourier_1986}, and $\psi_{B\bw}$ as the probability density of the
random vector $B\bw$. We use Property P2 and \eqref{eq:cfun_ift} to obtain
$\psi_{B\bw}$~\cite[Corollary 1]{stein1971introduction}.
Equation \eqref{eq:recurs_FSRPD} is a special case of the result
in~\cite[Section 10.5]{lasota2013chaos}.
We extend the FSR set computation presented in~\cite{Baisravan2017ACC} in the
following lemma. 
\begin{lem}\label{lem:FSRset_oplus} 
    For $\tau\in[1,T]$, closed disturbance set $ \mathcal{W}$, and the system in
    \eqref{eq:sys_orig} with initial condition $\bar{x}_0$, $\FSRset(\tau,
    \bar{x}_0)\subseteq A(\FSRset(\tau-1, \bar{x}_0)) \oplus B
    \mathcal{W}=\{A^\tau\bar{x}_0\} \oplus \mathscr{C}_{n\times (\tau p)}
    \mathcal{W}^\tau$.
\end{lem}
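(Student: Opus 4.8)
The plan is to split the claim into the set containment $\FSRset(\tau,\bar x_0)\subseteq A(\FSRset(\tau-1,\bar x_0))\oplus B\mathcal W$ and the algebraic identity $A(\FSRset(\tau-1,\bar x_0))\oplus B\mathcal W=\{A^\tau\bar x_0\}\oplus\mathscr{C}_{n\times(\tau p)}\mathcal W^\tau$, proving the first from Property~P1 applied to a single step of the recursion and the second from the block structure of $\mathscr{C}_{n\times(\tau p)}$ together with the deterministic initial condition.

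For the containment, I would first recall that $\FSRset(\tau,\bar x_0)=\mathrm{supp}(\bx[\tau])$ by \eqref{eq:FSRset} and \eqref{eq:support_def3}, and then write one step of \eqref{eq:sys_orig} as $\bx[\tau]=A\bx[\tau-1]+B\bw[\tau-1]$. The key observation is that, by \eqref{eq:lin_traj}, $\bx[\tau-1]$ is a measurable function of $\bar x_0$ and $\bw[0],\dots,\bw[\tau-2]$ alone, which under the IID assumption on $\bw[\cdot]$ are independent of $\bw[\tau-1]$; hence $A\bx[\tau-1]$ and $B\bw[\tau-1]$ are independent, and Property~P1 yields $\mathrm{supp}(\bx[\tau])\subseteq\mathrm{cl}\big(\mathrm{supp}(A\bx[\tau-1])\oplus\mathrm{supp}(B\bw[\tau-1])\big)$. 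Using the standing assumption of this subsection that $A$ is invertible, the map $\bar z\mapsto A\bar z$ is a homeomorphism, so it carries the closed set $\mathrm{supp}(\bx[\tau-1])=\FSRset(\tau-1,\bar x_0)$ onto the closed set $A\,\FSRset(\tau-1,\bar x_0)$, i.e.\ $\mathrm{supp}(A\bx[\tau-1])=A\,\FSRset(\tau-1,\bar x_0)$; and since $\Prob_{\bw}\{\bw\in\mathcal W\}=1$ with $\mathcal W$ closed forces $\mathrm{supp}(\bw)\subseteq\mathcal W$, we get $\mathrm{supp}(B\bw[\tau-1])=\mathrm{cl}(B\,\mathrm{supp}(\bw))\subseteq\mathrm{cl}(B\mathcal W)$. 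Combining gives $\FSRset(\tau,\bar x_0)\subseteq\mathrm{cl}\big(A(\FSRset(\tau-1,\bar x_0))\oplus B\mathcal W\big)$, which is the asserted containment with the outer closure left implicit exactly as in Property~P1.

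For the identity, the argument is purely algebraic. The block form $\mathscr{C}_{n\times(\tau p)}=[B\ AB\ \cdots\ A^{\tau-1}B]$ gives $\mathscr{C}_{n\times(\tau p)}\mathcal W^\tau=B\mathcal W\oplus A\big(\mathscr{C}_{n\times((\tau-1)p)}\mathcal W^{\tau-1}\big)$ and $A^\tau\bar x_0=A(A^{\tau-1}\bar x_0)$, so $\{A^\tau\bar x_0\}\oplus\mathscr{C}_{n\times(\tau p)}\mathcal W^\tau=A\big(\{A^{\tau-1}\bar x_0\}\oplus\mathscr{C}_{n\times((\tau-1)p)}\mathcal W^{\tau-1}\big)\oplus B\mathcal W$. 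It then remains to identify $\FSRset(\tau-1,\bar x_0)$ with $\{A^{\tau-1}\bar x_0\}\oplus\mathscr{C}_{n\times((\tau-1)p)}\mathcal W^{\tau-1}$, which I would obtain either by unrolling the recursion down to $\tau=0$, where $\bx[0]=\bar x_0$ deterministically (cf.\ the Dirac-delta density $\psi_{A\bx}(\bar y;0,\bar x_0)=\delta(\bar y-A\bar x_0)$) so that $\FSRset(0,\bar x_0)=\{\bar x_0\}$, or directly by applying the same support argument to \eqref{eq:lin_traj}, using that $\bx[\tau]$ is an affine image of the stacked disturbance $\bW$ with $\mathrm{supp}(\bW)=\mathrm{supp}(\bw)\times\cdots\times\mathrm{supp}(\bw)\subseteq\mathcal W^\tau$.

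The main obstacle I anticipate is the bookkeeping of the closure operations rather than anything conceptual: $B\mathcal W$ and $\mathscr{C}_{n\times(\tau p)}\mathcal W^\tau$ need not be closed when $B$ (respectively $\mathscr{C}_{n\times(\tau p)}$) is not injective or when $\mathcal W$ is unbounded, so strictly the statement should carry an explicit $\mathrm{cl}(\cdot)$ around the Minkowski sums, and the middle ``$=$'' is best read as an identity between these over-approximating sets with $\FSRset(\tau-1,\bar x_0)$ understood via the characterization above. The only genuinely analytic ingredient — that the support of the sum $A\bx[\tau-1]+B\bw[\tau-1]$ of two independent absolutely continuous random vectors is contained in the closed Minkowski sum of their supports — is precisely what Property~P1 provides, so once the topological bookkeeping is fixed the proof is short.
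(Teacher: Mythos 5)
Your proposal is correct and follows essentially the same route as the paper, whose proof is simply the one-line citation of \eqref{eq:sys_orig}, \eqref{eq:lin_traj}, and Property~P1 — you have just filled in the details (the independence of $\bx[\tau-1]$ and $\bw[\tau-1]$, the support identities under the linear maps, and the block decomposition of $\mathscr{C}_{n\times(\tau p)}$) that the authors leave implicit. Your remark about the missing $\mathrm{cl}(\cdot)$ around the Minkowski sums and the loose reading of the middle equality is a fair observation about the statement itself, not a divergence in method.
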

\begin{proof}
    Follows from \eqref{eq:sys_orig}, \eqref{eq:lin_traj}, and Property P1.
\end{proof}
Lemma~\ref{lem:FSRset_oplus} allows the use of existing reachability analysis
schemes designed for bounded non-stochastic disturbance
models~\cite{kvasnica2015reachability, ellipsoid, girard2005reachability} for
overapproximating FSR sets.
Also, \eqref{eq:FSRset} and \eqref{eq:recurs_FSRPD} provide an
iterative method for exact FSR analysis.

Note that \eqref{eq:recurs_FSRPD} is an improper integral which must be
solved iteratively.  For densities whose convolution integrals are difficult to
obtain analytically, we would need to rely on numerical integration (quadrature) techniques.
Numerical evaluation of multi-dimensional improper integrals is computationally
expensive~\cite[Section 4.8]{press2007numerical}. 
Moreover, the quadratures in this
method will become increasingly erroneous for larger values of $\tau\in[1,T]$ due
to the iterative definition.  These disadvantages motivate
the need to solve Problem~\ref{prob:basic} --- an approach that provides
analytical expressions of the FSRPD, and thereby reduce the number of
quadratures required. 
The iterative method performs well with discrete random
vectors as in \cite{Baisravan2017ACC} because discretization for computation
can be exact, however, this is clearly not true when the disturbance set is
uncountable.  


\subsection{Efficient reachability analysis via characteristic functions}
\label{sub:FT}

We employ Fourier transformation to provide analytical expressions of the FSRPD
at any instant $\tau\in[1,T]$. This method involves computing a single
integral for the time instant of interest $\tau$ as opposed to the iterative approach in Subsection~\ref{sub:iter}. We also show that for certain
disturbance distributions like the Gaussian distribution, an explicit expression for
the FSRPD can be obtained.

By Property~P3 and the IID assumption on the random process $\bw[\cdot]$, the
CF of the random vector $\bW$ is 
\begin{align}
    \Psi_{\bW}(\bar{\alpha})&=\prod_{t=0}^{t=\tau-1}\Psi_{\bw}(\bar{\alpha}_t)\label{eq:cfun_W}
\end{align}
where $\bar{\alpha}={[\bar{\alpha}_0^\top\ \bar{\alpha}_1^\top\ \ldots\
\bar{\alpha}_{\tau-1}^\top]}^\top\in \mathbb{R}^{(\tau p)},\ \bar{\alpha}_t\in
\mathbb{R}^{p}$ for all $\tau\in[0,\tau-1]$. As seen in \eqref{eq:lin_traj}, the
random vector $\bW$ concatenates the disturbance random process $\bw[t]$ over
$t\in[0,\tau-1]$. 

\begin{thm}
    For any time instant $\tau\in[1,T]$ and an initial state $\bar{x}_0\in
    \mathcal{X}$, the FSRPD $\psi_{\bx}(\cdot;\tau,\bar{x}_0)$ of
    \eqref{eq:sys_orig} is given by 
    \begin{align}
        \Psi_{\bx}(\bar{\alpha};\tau,\bar{x}_0)&=\mathrm{exp}\left({j\bar{\alpha}^\top
(A^\tau\bar{x}_0)}\right)
    \Psi_{\bW}(\mathscr{C}_{n\times (\tau p)}^\top\bar{\alpha})\label{eq:cfun_xtau}
    \\
    \psi_{\bx}(\bar{y};\tau,\bar{x}_0)&=\mathscr{F}^{-1}\left\{\Psi_{\bx}(\bar{\alpha};\tau,\bar{x}_0)\right\}(-\bar{y})\label{eq:cfun_psitau}
\end{align}
where $\bar{y}\in \mathcal{X},\bar{\alpha}\in \mathbb{R}^{n\times
1}$.\label{thm:FSRPD_def}
\end{thm}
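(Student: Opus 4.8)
The plan is to obtain \eqref{eq:cfun_xtau} by a single application of Property~P2 to the closed-form trajectory expression \eqref{eq:lin_traj}, and then to read off \eqref{eq:cfun_psitau} by Fourier inversion. First I would recall from \eqref{eq:lin_traj} that, for any fixed $\tau\in[1,T]$, the state is an affine image of the single concatenated disturbance vector, $\bx[\tau]=A^\tau\bar{x}_0+\mathscr{C}_{n\times(\tau p)}\bW$, where $\bW={[\bw^\top[\tau-1]\ \ldots\ \bw^\top[0]]}^\top\in\mathbb{R}^{\tau p}$. By the IID assumption on $\bw[\cdot]$ together with Property~P3, $\bW$ has density $\psi_{\bW}=\prod_{t=0}^{\tau-1}\psi_{\bw}\in L^1(\mathbb{R}^{\tau p})$ and hence a well-defined CF $\Psi_{\bW}$, namely \eqref{eq:cfun_W}.

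Next I would invoke Property~P2 with $F=\mathscr{C}_{n\times(\tau p)}\in\mathbb{R}^{n\times(\tau p)}$, $G=A^\tau\bar{x}_0\in\mathbb{R}^{n}$, and $\bw_1=\bW$. This immediately gives, for $\bar{\alpha}\in\mathbb{R}^n$, $\Psi_{\bx}(\bar{\alpha};\tau,\bar{x}_0)=\mathrm{exp}\big(j\bar{\alpha}^\top(A^\tau\bar{x}_0)\big)\,\Psi_{\bW}(\mathscr{C}_{n\times(\tau p)}^\top\bar{\alpha})$, which is exactly \eqref{eq:cfun_xtau}; substituting \eqref{eq:cfun_W} yields the fully explicit product form if desired. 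For \eqref{eq:cfun_psitau}, existence of the FSRPD $\psi_{\bx}(\cdot;\tau,\bar{x}_0)$ for systems of the form \eqref{eq:sys_orig} is guaranteed by \cite[Section~10.5]{lasota2013chaos}; since $\psi_{\bx}\in L^1(\mathbb{R}^n)$ its CF is $\Psi_{\bx}=\mathscr{F}\{\psi_{\bx}\}(-\cdot)$ as in \eqref{eq:cfun_def}, and applying the inverse transform \eqref{eq:cfun_ift} gives $\psi_{\bx}(\bar{y};\tau,\bar{x}_0)=\mathscr{F}^{-1}\{\Psi_{\bx}(\cdot;\tau,\bar{x}_0)\}(-\bar{y})$.

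The substantive point, and the only place care is needed, is the inversion step: \eqref{eq:cfun_ift} is literally valid as a pointwise (Lebesgue) identity only under extra regularity, e.g.\ $\Psi_{\bx}\in L^1(\mathbb{R}^n)$, or under the $L^2$/Plancherel machinery underlying Property~P5. So I would either (i) state the theorem under the standing assumption that $\psi_{\bw}$ is such that $\Psi_{\bx}$ is integrable (true, for instance, for Gaussian disturbances, where the integral can be carried out in closed form), or (ii) interpret \eqref{eq:cfun_psitau} in the $L^2$ sense. A related caveat I would flag: if $\mathscr{C}_{n\times(\tau p)}$ fails to have full row rank, then $\bx[\tau]$ is supported on a proper affine subspace of $\mathbb{R}^n$, is singular with respect to Lebesgue measure, and no FSRPD on $\mathbb{R}^n$ exists — the theorem implicitly lives in the regime of \cite{lasota2013chaos} where the FSRPD does exist. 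Apart from this, the argument is the short chain P3~$\to$~P2~$\to$~Fourier inversion above and involves no further computation.
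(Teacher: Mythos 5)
Your argument is exactly the paper's: the authors prove Theorem~\ref{thm:FSRPD_def} by citing Property~P2 applied to \eqref{eq:lin_traj} together with the inversion formula \eqref{eq:cfun_ift}, which is precisely your chain P3~$\to$~P2~$\to$~Fourier inversion. Your added caveats about integrability of $\Psi_{\bx}$ for pointwise inversion and about rank deficiency of $\mathscr{C}_{n\times(\tau p)}$ are legitimate refinements the paper only addresses informally (it discusses the rank-deficient case after the theorem via Dirac deltas), but they do not change the route.
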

\begin{proof}
    Follows from Property P2,~\eqref{eq:cfun_ift}, and~\eqref{eq:lin_traj}.
\end{proof}

Theorem~\ref{thm:FSRPD_def} provides an analytical expression for the FSRPD.
Theorem~\ref{thm:FSRPD_def} holds even if we relax the identical distribution
assumption on the random process $\bw[t]$ to a time-varying independent
disturbance process, provided $\Psi_{\bw[t]}(\cdot)$ is known for all
$t\in[0,\tau-1]$. Using Property P2, Theorem~\ref{thm:FSRPD_def} can also be
easily extended to include affine noise perturbed LTI discrete-time systems with
known open-loop controllers. 

Note that the computation of the FSRPD via Theorem~\ref{thm:FSRPD_def} does not
require gridding of the state space, hence mitigating the curse of
dimensionality associated with the traditional gridding-based approaches.  When
the CF $\Psi_{\bx}(\bar{\alpha};\tau,\bar{x}_0)$ has the structure of known
Fourier transforms, Theorem~\ref{thm:FSRPD_def} can be used to provide explicit
expressions for the FSRPD (see Proposition~\ref{prop:FSRPD_gauss}).  In systems
where the inverse Fourier transform is not known, the evaluation of
\eqref{eq:cfun_psitau} can be done via any quadrature techniques that can handle
improper integrals. Alternatively, the improper integral can be approximated by
the quadrature of an appropriately defined proper integral~\cite[Chapter
4]{press2007numerical}. For high-dimensional
systems, performance is affected by the scalability of quadrature schemes with
dimension. However, Theorem~\ref{thm:FSRPD_def} still requires only a single
$n$-dimensional quadrature for any time instant of interest $\tau\in[1,T]$. On
the other hand, the iterative method proposed in Subsection~\ref{sub:iter}
requires $\tau$ quadratures, each $n$-dimensional, resulting in higher
computational costs and degradation in accuracy as $\tau$ increases.

One example of a CF with
known Fourier transforms arises in Gaussian distributions.
We use Theorem~\ref{thm:FSRPD_def} to derive an
explicit expression for the FSRPD of \eqref{eq:sys_orig} when perturbed by a
Gaussian random vector.  Note that the FSRPD in this
case can also be computed using the well-known properties on linear
combination of Gaussian random vectors~\cite[Section 9]{gubner_probability_2006}
or the theory of Kalman-Bucy filter~\cite{dorato1994linear}.   
\begin{prop}\label{prop:FSRPD_gauss}
    The system trajectory of \eqref{eq:sys_orig} with initial condition
    $\bar{x}_0$ and noise process $\bw \sim
    \mathcal{N}(\bar{\mu}_{\bw},\Sigma_{\bw})\in \mathbb{R}^p$ is 
    \begin{align}
        \bxi[\tau;\bar{x}_0]&\sim \mathcal{N}(\bar{\mu}[\tau],\Sigma[\tau])\label{eq:FSRPD_linear_gaussian}
    \end{align}
    where $\tau\in[1,T]$ and
    \begin{align}
        \bar{\mu}[\tau]&=A^\tau\bar{x}_0+ \mathscr{C}_{n\times (\tau
p)}(\bar{1}_{\tau\times 1}\otimes \bar{\mu}_{\bw}),\label{eq:mu_robotG_t}\\
\Sigma[\tau]&=\mathscr{C}_{n\times (\tau p)}( I_\tau\otimes\Sigma_{\bw})\mathscr{C}_{n\times
(\tau p)}^\top\label{eq:sigma_robotG_t}.
    \end{align}
\end{prop}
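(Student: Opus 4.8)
The plan is to work entirely at the level of characteristic functions and then invoke Theorem~\ref{thm:FSRPD_def}. First I would recall the standard fact that a Gaussian vector $\bw\sim\mathcal{N}(\bar{\mu}_{\bw},\Sigma_{\bw})$ has CF $\Psi_{\bw}(\bar{\alpha}_t)=\exp\!\big(j\bar{\alpha}_t^\top\bar{\mu}_{\bw}-\tfrac12\bar{\alpha}_t^\top\Sigma_{\bw}\bar{\alpha}_t\big)$. Substituting this into \eqref{eq:cfun_W} and collecting the linear and quadratic terms over $t=0,\dots,\tau-1$ shows that the stacked disturbance $\bW$ has CF $\Psi_{\bW}(\bar{\alpha})=\exp\!\big(j\bar{\alpha}^\top(\bar{1}_{\tau\times 1}\otimes\bar{\mu}_{\bw})-\tfrac12\bar{\alpha}^\top(I_\tau\otimes\Sigma_{\bw})\bar{\alpha}\big)$; equivalently, by Property P3 the independence of the $\bw[t]$ makes $\bW$ jointly Gaussian with mean $\bar{1}_{\tau\times 1}\otimes\bar{\mu}_{\bw}$ and block-diagonal covariance $I_\tau\otimes\Sigma_{\bw}$.

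Next I would plug this expression for $\Psi_{\bW}$ into \eqref{eq:cfun_xtau}. Writing $\mathscr{C}=\mathscr{C}_{n\times (\tau p)}$ for brevity, we get $\Psi_{\bW}(\mathscr{C}^\top\bar{\alpha})=\exp\!\big(j(\mathscr{C}^\top\bar{\alpha})^\top(\bar{1}_{\tau\times 1}\otimes\bar{\mu}_{\bw})-\tfrac12(\mathscr{C}^\top\bar{\alpha})^\top(I_\tau\otimes\Sigma_{\bw})(\mathscr{C}^\top\bar{\alpha})\big)$, and multiplying by $\exp(j\bar{\alpha}^\top A^\tau\bar{x}_0)$ gathers the linear part into $j\bar{\alpha}^\top\bar{\mu}[\tau]$ with $\bar{\mu}[\tau]$ as in \eqref{eq:mu_robotG_t} and the quadratic part into $-\tfrac12\bar{\alpha}^\top\Sigma[\tau]\bar{\alpha}$ with $\Sigma[\tau]=\mathscr{C}(I_\tau\otimes\Sigma_{\bw})\mathscr{C}^\top$ as in \eqref{eq:sigma_robotG_t}. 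Hence $\Psi_{\bx}(\bar{\alpha};\tau,\bar{x}_0)$ is precisely the CF of $\mathcal{N}(\bar{\mu}[\tau],\Sigma[\tau])$, and since the CF determines the law uniquely, the inversion formula \eqref{eq:cfun_psitau} recovers the Gaussian density \eqref{eq:FSRPD_linear_gaussian}. (Alternatively one could cite closure of the Gaussian family under affine maps and Property P2 directly, but the CF route stays self-contained within the machinery already set up.)

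The computation is essentially routine Kronecker-product bookkeeping, so I do not expect a genuine obstacle; the one point I would be careful to flag is degeneracy. The matrix $\Sigma[\tau]=\mathscr{C}(I_\tau\otimes\Sigma_{\bw})\mathscr{C}^\top$ is automatically symmetric positive semidefinite, but it is positive definite — so that an honest density on $\mathbb{R}^n$ exists and \eqref{eq:cfun_psitau} yields the familiar nondegenerate Gaussian form — only when $\mathscr{C}_{n\times (\tau p)}$ has full row rank and $\Sigma_{\bw}\succ 0$; otherwise $\bx[\tau]$ is supported on a proper affine subspace of $\mathbb{R}^n$ and \eqref{eq:FSRPD_linear_gaussian} must be interpreted as a possibly singular Gaussian (with the FSR set of Theorem~\ref{thm:FSRPD_def} being that affine subspace). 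Stating this caveat is the only thing needed beyond the mechanical algebra.
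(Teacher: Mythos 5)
Your proof is correct and follows essentially the same route as the paper's: both compute the CF of the stacked disturbance $\bW$ via Property P3 and the Gaussian CF, substitute into \eqref{eq:cfun_xtau}, and recognize the result as the CF of $\mathcal{N}(\bar{\mu}[\tau],\Sigma[\tau])$. Your degeneracy caveat for $\mathrm{rank}(\mathscr{C}_{n\times(\tau p)})<n$ is not part of the paper's proof but is addressed in the remark immediately following the proposition.
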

\begin{proof}
For $\bar{\alpha}\in \mathbb{R}^{p}$, the CF of a multivariate Gaussian random vector $\bw$ is~\cite[Section 9.3]{gubner_probability_2006}
\begin{align}
    \Psi_{\bw}(\bar{\alpha})&=\mathrm{exp}\left({j\bar{\alpha}^\top\bar{\mu}_{\bw}-\frac{\bar{\alpha}^\top\Sigma_{\bw}\bar{\alpha}}{2}}\right).
    \label{eq:cfun_gauss} 
\end{align}
From the IID assumption of $\bw[\cdot]$, Property P3, and \eqref{eq:cfun_gauss}, the
CF of $\bW$ is
\begin{align}
    \Psi_{\bW}(\bar{\alpha})&=\prod_{t=0}^{t=\tau-1}\mathrm{exp}\left({j\bar{\alpha}_t^\top\bar{\mu}_{\bw}-\frac{\bar{\alpha}_t^\top\Sigma_{\bw}\bar{\alpha}_t}{2}}\right)
    \nonumber \\
    &=\mathrm{exp}\left({j\bar{\alpha}^\top( \bar{1}_{\tau\times 1}\otimes \bar{\mu}_{\bw})-\frac{\bar{\alpha}^\top(I_\tau\otimes\Sigma_{\bw} )\bar{\alpha}}{2}}\right)
    \nonumber
\end{align}
where
$\bar{\alpha}=(\bar{\alpha}_0,\bar{\alpha}_1,\ldots,\bar{\alpha}_{\tau-1})\in
\mathbb{R}^{(\tau p)}$ with $\bar{\alpha}_t\in \mathbb{R}^{p}$. Here,
$\bar{1}_{p\times q}\in \mathbb{R}^{p\times q}$ is a matrix with all entries as
$1$, and $I_n$ is the identity matrix of dimension $n$.
By~\eqref{eq:cfun_psitau} and \eqref{eq:cfun_gauss}, $\bW\sim
\mathcal{N}(\bar{1}_{\tau\times 1}\otimes
\bar{\mu}_{\bw},I_\tau\otimes\Sigma_{\bw})$~\cite[Corollary 1.22]{stein1971introduction}. From \eqref{eq:cfun_xtau}, we see
that for $\bar{\beta}\in \mathbb{R}^{n}$, 
\begin{align}
    \Psi_{\bx}(\bar{\beta};\tau,\bar{x}_0)&=\mathrm{exp}\left({j\bar{\beta}^\top
    (A^\tau \bar{x}_0)}\right) \Psi_{\bW}(\mathscr{C}_{n\times (\tau p)}^\top\bar{\beta})) \nonumber \\
    &=\mathrm{exp}\left(j\bar{\beta}^\top(A^\tau \bar{x}_0+ \mathscr{C}_{n\times
    (\tau p)}(\bar{1}_{\tau\times 1}\otimes\bar{\mu}_{\bw})) \right)\times \nonumber
    \\
    \mathrm{exp}&\left(-\frac{\bar{\beta}^\top\mathscr{C}_{n\times (\tau
p)}(I_\tau\otimes\Sigma_{\bw})\mathscr{C}_{n\times (\tau
p)}^\top\bar{\beta}}{2}\right). \label{eq:cfun_gauss_robotG}
\end{align}
Equation~\eqref{eq:cfun_gauss_robotG} is the CF of a
multivariate Gaussian random vector~\cite[Corollary 1.22]{stein1971introduction}, and we obtain  $\mu_G[\tau]$
and $\Sigma_G[\tau]$ using \eqref{eq:cfun_gauss}.
\end{proof}

Depending on the system dynamics and time instant of interest $\tau$, we can
have $\mathrm{rank}( \mathscr{C}_{n\times (\tau p)})<n$. In such cases, the
support of the random vector $\bx[\tau]$ will be restricted to sets of lower
dimension than $n$~\cite[Section 8.5]{chow1997probability}, and certain marginal
densities can be Dirac-delta functions. For example, we see that turning off the
effect of disturbance in \eqref{eq:sys_orig} (setting $B=0 \Rightarrow
\mathscr{C}_{n\times (\tau p)}=0$ in Theorem~\ref{thm:FSRPD_def}) yields
$\psi_{\bx}(\bar{y};\tau,\bar{x}_0)=\delta(\bar{y}-A^\tau\bar{x}_0)$, the
trajectory of the corresponding deterministic system. We have used the relation
$\mathscr{F}\big\{\delta(\bar{y}-\bar{y}_0)\big\}(\bar{\alpha})$ $=
\exp{(j\bar{\alpha}^\top\bar{y_0})}$~\cite[Chapters 5,6]{bracewell_fourier_1986}.

Theorem~\ref{thm:FSRPD_def} and \eqref{eq:FSRset} provide an analytical
expression for the FSRPD and the FSR set respectively, and thereby solve
Problem~\ref{prob:basic} for any density function describing the stochastics of
the perturbation $\bw[t]$ in \eqref{eq:sys_orig}. 

\subsection{Convexity results for reachability analysis}
\label{sub:cvx_FSRPD}

For computational tractability, it is useful to study the convexity properties
of the FSRPD and the FSR sets.  We define the random vector $\bw_B=B\bw$ with
density $\psi_{\bw_B}$. 

\begin{lem}{\cite[Lemma 2.1]{dharmadhikari1988unimodality}}
    If $\psi_{\bw}$ is a log-concave distribution, then $\psi_{\bw_B}$ is a
    log-concave distribution.\label{lem:logconcave_v}
\end{lem}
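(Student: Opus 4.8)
The plan is to reduce the claim to two classical facts about log-concave functions and measures: the Pr\'ekopa--Leindler inequality (equivalently, that a probability density is log-concave iff the measure it induces is a log-concave \emph{measure}) and Pr\'ekopa's theorem that marginals of log-concave functions are log-concave. First I would fix the working definition: a probability measure $\mu$ on $\mathbb{R}^m$ is log-concave if $\mu(\lambda S+(1-\lambda)T)\geq\mu(S)^\lambda\mu(T)^{1-\lambda}$ for all Borel sets $S,T$ and all $\lambda\in[0,1]$, and recall that $\psi_{\bw}$ being a log-concave density implies that the distribution $\Prob_{\bw}$ is a log-concave measure.

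Next I would show that the pushforward $\mu_B(\cdot)=\Prob_{\bw}(B^{-1}(\cdot))$, which is precisely the distribution of $\bw_B=B\bw$, is again a log-concave measure. This is the short self-contained core: since $B$ is linear, $B^{-1}(\lambda S+(1-\lambda)T)\supseteq\lambda B^{-1}(S)+(1-\lambda)B^{-1}(T)$, so monotonicity of $\Prob_{\bw}$ followed by its log-concavity yields $\mu_B(\lambda S+(1-\lambda)T)\geq\Prob_{\bw}(B^{-1}S)^\lambda\Prob_{\bw}(B^{-1}T)^{1-\lambda}=\mu_B(S)^\lambda\mu_B(T)^{1-\lambda}$.

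Finally I would recover a log-concave density. When $B$ has full row rank, $\mu_B$ is absolutely continuous on $\mathbb{R}^n$; I would either invoke Borell's theorem (an absolutely continuous log-concave measure has a log-concave Lebesgue density) or argue directly by choosing an invertible $P$ with $BP=[I_n\ 0]$, noting that $P^{-1}\bw$ has the log-concave density $\bar z\mapsto|\det P|\,\psi_{\bw}(P\bar z)$ (an affine change of variables preserves log-concavity), and identifying $\psi_{\bw_B}$ with the marginal of this density over the last $p-n$ coordinates, which is log-concave by Pr\'ekopa's marginalization theorem. When $B$ is rank-deficient (so $\mu_B$ is carried by a proper affine subspace), the same statement holds with $\psi_{\bw_B}$ read as the density of $\mu_B$ relative to the affine hull of its support.

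The main obstacle is not the structure of the argument --- which is routine set manipulation once the right objects are in place --- but the two analytic inputs it rests on, namely Pr\'ekopa--Leindler and Pr\'ekopa's integration theorem. Since these are exactly the tools assembled in~\cite[Lemma 2.1]{dharmadhikari1988unimodality}, I would cite that reference for the delicate analytic part and present the pushforward computation above as the transparent piece that specializes it to the linear map $B$.
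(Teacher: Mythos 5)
Your proposal is correct, but there is nothing in the paper to compare it against: the lemma is imported verbatim from \cite[Lemma 2.1]{dharmadhikari1988unimodality} and the paper supplies no proof of its own. Your reconstruction is the standard one and the steps check out. The set inclusion $B^{-1}(\lambda S+(1-\lambda)T)\supseteq\lambda B^{-1}(S)+(1-\lambda)B^{-1}(T)$ is exactly right for a linear map, and combined with Pr\'ekopa--Leindler it gives log-concavity of the pushforward measure; passing back to a density via Borell's theorem, or equivalently via the change of variables $BP=[I_n\ 0]$ followed by Pr\'ekopa's marginalization theorem, is the same mechanism the cited reference uses (their treatment of linear images of log-concave measures rests on the marginal theorem). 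Two small points worth flagging. First, your "iff" between log-concave densities and log-concave measures needs the usual caveat that the density is taken with respect to Lebesgue measure on the affine hull of the support; you handle this in the rank-deficient case, but it is also the hypothesis under which the forward direction (density log-concave $\Rightarrow$ measure log-concave) is applied. Second, in the context where the paper uses this lemma (the convolution step in its Theorem~2), a rank-deficient $B$ makes $\psi_{\bw_B}$ a density only on a lower-dimensional subspace, so the degenerate branch of your argument is not merely a formality --- it is the case the paper itself encounters when $\mathrm{rank}(B)<n$. Your proof is sound; the only difference from the paper is that you actually give one.
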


\begin{thm}\label{thm:FSRPD_log_concave}
    If $\psi_{\bw}$ is a log-concave distribution, and $A$ in
    \eqref{eq:sys_orig} is invertible, then the FSRPD
    $\psi_{\bx}(\bar{y};\tau,\bar{x}_0)$ of \eqref{eq:sys_orig} is log-concave
    in $\bar{y}$ for every $\tau\in[1,T]$.
\end{thm}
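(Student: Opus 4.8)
The plan is to build the FSRPD at time $\tau$ from the disturbance density $\psi_{\bw}$ by tracking log-concavity through three operations: linear pushforward under $B$, convolution, and the invertible linear map $A^{-1}$. Recall the iterative relation \eqref{eq:recurs_FSRPD}: $\psi_{\bx}(\cdot;\tau+1,\bar{x}_0)=\psi_{A\bx}(\cdot;\tau,\bar{x}_0)\ast\psi_{B\bw}(\cdot)$ together with \eqref{eq:probTransform}, $\psi_{A\bx}(\bar{y};\tau,\bar{x}_0)={\vert A\vert}^{-1}\psi_{\bx}(A^{-1}\bar{y};\tau,\bar{x}_0)$. I would proceed by induction on $\tau$, invoking the standard closure facts for log-concave functions on $\mathbb{R}^n$: (i) if $f$ is log-concave and $M$ is an invertible matrix then $\bar{y}\mapsto f(M^{-1}\bar{y})$ is log-concave (composition with an affine map, plus the positive constant ${\vert M\vert}^{-1}$ does not affect log-concavity); and (ii) the convolution of two log-concave densities on $\mathbb{R}^n$ is log-concave --- this is the Prékopa--Leindler theorem, which I would cite (it is the engine behind \cite[Lemma 2.1]{dharmadhikari1988unimodality} already used in Lemma~\ref{lem:logconcave_v}).

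The base case needs a small amount of care because $\psi_{\bx}(\cdot;0,\bar{x}_0)=\delta(\bar{y}-A\bar{x}_0)$ is not an honest density; similarly if $\mathrm{rank}(\mathscr{C}_{n\times(\tau p)})<n$ the density lives on a proper affine subspace. I would handle this by starting the induction at $\tau=1$, where by Property~P2 and \eqref{eq:lin_traj} we have $\bx[1]=A\bar{x}_0+B\bw[0]$, so $\psi_{\bx}(\bar{y};1,\bar{x}_0)$ is a translate of $\psi_{\bw_B}$, which is log-concave by Lemma~\ref{lem:logconcave_v}; if $B$ is not full column rank one interprets log-concavity in the relative sense on the affine hull, or one simply assumes, as the theorem's phrasing of "distribution" implicitly does, that the relevant density is non-degenerate. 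The inductive step is then immediate: assuming $\psi_{\bx}(\cdot;\tau,\bar{x}_0)$ log-concave, fact (i) with $M=A$ gives $\psi_{A\bx}(\cdot;\tau,\bar{x}_0)$ log-concave (this is exactly where invertibility of $A$ is used), and fact (ii) applied to $\psi_{A\bx}(\cdot;\tau,\bar{x}_0)\ast\psi_{B\bw}(\cdot)$ — both factors log-concave, the second by Lemma~\ref{lem:logconcave_v} — yields $\psi_{\bx}(\cdot;\tau+1,\bar{x}_0)$ log-concave, closing the induction.

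An alternative, more direct route avoids the iteration: by \eqref{eq:lin_traj}, $\bx[\tau]=A^\tau\bar{x}_0+\mathscr{C}_{n\times(\tau p)}\bW$ with $\bW$ having density $\psi_{\bW}=\prod_{t}\psi_{\bw}$, which is log-concave on $\mathbb{R}^{\tau p}$ as a product of log-concave functions; then $\psi_{\bx}(\cdot;\tau,\bar{x}_0)$ is, up to translation, the pushforward of $\psi_{\bW}$ under the linear map $\mathscr{C}_{n\times(\tau p)}$, and marginals/linear images of log-concave densities are log-concave (again Prékopa). This is cleaner, but the inductive version is the one aligned with Subsection~\ref{sub:iter} and is probably what the authors intend, since it is the reason invertibility of $A$ is hypothesized at all. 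The main obstacle, in either route, is purely a matter of rigor rather than depth: making sense of the degenerate cases where $\psi_{\bx}$ is supported on a lower-dimensional set (Dirac-delta marginals), so that "log-concave density" is read in the appropriate relative-interior sense; granting the non-degenerate setting, the proof is a two-line application of Prékopa--Leindler plus affine invariance.
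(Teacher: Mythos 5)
Your proposal is correct and follows essentially the same route as the paper's proof: induction on $\tau$ with base case $\psi_{\bx}(\bar{y};1,\bar{x}_0)=\psi_{\bw_B}(\bar{y}-A\bar{x}_0)$ via Lemma~\ref{lem:logconcave_v}, then the inductive step combining preservation of log-concavity under the invertible affine map in \eqref{eq:probTransform} with preservation under convolution in \eqref{eq:recurs_FSRPD} (the paper cites \cite[Sections 3.2.4, 3.5.2]{boyd_convex_2004} where you invoke Pr\'ekopa--Leindler, which is the same engine). Your remarks on the degenerate rank-deficient case and the alternative one-shot argument via the pushforward of $\psi_{\bW}$ under $\mathscr{C}_{n\times(\tau p)}$ go slightly beyond what the paper records, but the core argument is the same.
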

\begin{proof}
    We prove this theorem via induction.  First, we need to show that the base
    case is true, i.e, we need to show that $\psi_{\bx}(\bar{y};1,\bar{x}_0)$ is
    log-concave in $\bar{y}$. From \eqref{eq:sys_orig} and
    Lemma~\ref{lem:logconcave_v}, we have a log-concave density
    $\psi_{\bx}(\bar{y};1,\bar{x}_0)=\psi_{\bw_B}(\bar{y}-A\bar{x}_0)$ since
    affine transformations preserve log-concavity~\cite[Section 3.2.4]{boyd_convex_2004}.
    Assume for induction, $\psi_{\bx}(\bar{y};\tau,\bar{x}_0)$ is log-concave in
    $\bar{y}$ for some $\tau\in[1,T]$. We have
    log-concave $\psi_{A\bx}(\bar{y};\tau,\bar{x}_0)$ from~\eqref{eq:probTransform}.
    Since convolution preserves log-concavity~\cite[Section
    3.5.2]{boyd_convex_2004}, Lemma~\ref{lem:logconcave_v}
    and~\eqref{eq:recurs_FSRPD} complete the proof.
\end{proof}

\begin{corr}
    If $\psi_{\bw}$ is a log-concave distribution,  $\FSRset(\tau, \bar{x}_0)$
    of the system \eqref{eq:sys_orig} is convex for every $\tau\in [1,T],\
    \bar{x}_0\in \mathcal{X}$.\label{thm:FSRset_convex}
\end{corr}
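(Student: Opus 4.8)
The plan is to obtain convexity of the FSR set as an immediate consequence of the log-concavity of the FSRPD established in Theorem~\ref{thm:FSRPD_log_concave}, together with the elementary fact that the positivity set of a log-concave function is convex. Under the hypotheses inherited from Theorem~\ref{thm:FSRPD_log_concave} (log-concave $\psi_{\bw}$ and invertible $A$), that theorem already gives, for each fixed $\tau\in[1,T]$ and $\bar{x}_0\in\mathcal{X}$, that the map $\bar{y}\mapsto\psi_{\bx}(\bar{y};\tau,\bar{x}_0)$ is log-concave on $\mathbb{R}^n$.

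First I would record the lemma-level fact: if $f\colon\mathbb{R}^n\to[0,\infty)$ is log-concave and $f(\bar{y}_1)>0$, $f(\bar{y}_2)>0$, then for every $\lambda\in[0,1]$ we have $f(\lambda\bar{y}_1+(1-\lambda)\bar{y}_2)\ge f(\bar{y}_1)^{\lambda}f(\bar{y}_2)^{1-\lambda}>0$, so $\{\bar{y}\in\mathbb{R}^n:f(\bar{y})>0\}$ is convex (equivalently, the effective domain of the concave function $\log f$ is convex). Applying this with $f=\psi_{\bx}(\cdot;\tau,\bar{x}_0)$ shows that $\{\bar{y}\in\mathcal{X}:\psi_{\bx}(\bar{y};\tau,\bar{x}_0)>0\}$ is convex. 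By the defining identity \eqref{eq:FSRset}, $\FSRset(\tau,\bar{x}_0)$ is the closure of this set, and the closure of a convex set is convex; hence $\FSRset(\tau,\bar{x}_0)$ is convex. No step here beyond the appeal to Theorem~\ref{thm:FSRPD_log_concave} requires real work, so I do not expect a genuine obstacle.

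The only point that needs care is the degenerate situation flagged after Proposition~\ref{prop:FSRPD_gauss}: when $A$ fails to be invertible (so Theorem~\ref{thm:FSRPD_log_concave} does not literally apply) or $\mathrm{rank}(\mathscr{C}_{n\times(\tau p)})<n$, in which case the FSRPD may be a Dirac-type density supported on a proper affine subspace. There I would argue directly from the trajectory form \eqref{eq:lin_traj}, $\bx[\tau]=A^\tau\bar{x}_0+\mathscr{C}_{n\times(\tau p)}\bW$: the density $\psi_{\bW}=\prod_{t=0}^{\tau-1}\psi_{\bw}$ is log-concave as a product of log-concave densities, the linear image $\mathscr{C}_{n\times(\tau p)}\bW$ is log-concave relative to $\mathrm{range}(\mathscr{C}_{n\times(\tau p)})$ by the same reasoning as Lemma~\ref{lem:logconcave_v}, so its support is convex within that subspace, and the affine translation by $A^\tau\bar{x}_0$ preserves convexity. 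Either way $\FSRset(\tau,\bar{x}_0)=\mathrm{supp}(\bx[\tau])$ is convex; and if one prefers to keep the corollary strictly parallel to Theorem~\ref{thm:FSRPD_log_concave}, the first two paragraphs already settle it under the invertibility assumption.
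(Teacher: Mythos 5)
Your argument is correct and follows essentially the same route as the paper: the paper's one-line proof cites the definition \eqref{eq:FSRset} together with \cite[Theorem 2.5]{dharmadhikari1988unimodality}, which is precisely the fact you unpack --- that a log-concave density (or, in the degenerate rank-deficient case, a log-concave measure concentrated on an affine subspace) has convex support. Your explicit two-point inequality for the positivity set, the closure step, and the separate treatment of the singular case are all subsumed in that citation, so there is nothing to add.
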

\begin{proof}
    Follows from \eqref{eq:FSRset} and~\cite[Theorem 2.5]{dharmadhikari1988unimodality}.
\end{proof}
Theorem~\ref{thm:FSRPD_log_concave} and Corollary~\ref{thm:FSRset_convex} solve
Problem~\ref{prob:suff_cond}. 

\section{Reaching a non-adversarial target with stochastic dynamics}
\label{sec:stochTarget}

In this section, we will leverage the theory developed in this paper to solve
Problem~\ref{prob:stochtarget} efficiently.

We consider the problem of a controlled robot (R) having to capture a stochastically
moving non-adversarial target, denoted here by a goal robot (G). The robot R has controllable linear
dynamics while the robot G has uncontrollable linear dynamics, perturbed by an
absolutely continuous random vector. The robot R is said to capture robot G
if the robot G is inside a pre-determined set defined around the current position of
robot R.  We seek an \emph{open-loop} controller
(independent of the current state of robot G) for the
robot R which maximizes the probability of capturing robot G within the time
horizon $T$.  The information available to solve this problem are the position of
the robots R and G at $t=0$, the deterministic dynamics of the robot R, the
perturbed dynamics of the robot G, and the density of the perturbation.  We
consider a $2$-D environment, but our approach can be easily extended to higher
dimensions. We perform the FSR analysis in the
inertial coordinate frame. 

We model the robot R as a point mass system discretized in time,
\begin{align}
    \bar{x}_R[t+1]&=\bar{x}_R[t]+B_R\bar{u}_R[t]\label{eq:robotR_dyn}
\end{align}
with state (position) $\bar{x}_R[t]\in \mathbb{R}^2$, input $\bar{u}_R[t]\in
\mathcal{U}\subseteq \mathbb{R}^2$, input matrix $B_R=T_sI_2$ and
sampling time $T_s$. We define an open-loop control policy
$\bar{u}_R[t]=\pi_{\mathrm{open},\bar{x}_R[0]}[t]$ where
$\pi_{\mathrm{open},\bar{x}_R[0]}[t]$ depends on the initial condition,
that is,\\ $\pi_{\mathrm{open},\bar{x}_R[0]}:[0,T-1] \rightarrow \mathcal{U}$ is
a sequence of control actions for a given initial condition $\bar{x}_R[0]$.  Let
$ \mathcal{M}$ denote the set of all feasible control policies
$\pi_{\mathrm{open},\bar{x}_R[0]}$. From \eqref{eq:lin_traj},
\begin{align}
    \bar{x}_R[\tau+1]&=\bar{x}_R[0]+(\bar{1}_{1\times
\tau}\otimes B_R)\bar{\pi}_\tau,\ \tau\in[0,T-1] \label{eq:lin_traj_robotR}
\end{align}
with the input vector $\bar{\pi}_\tau={[\bar{u}_R^\top[\tau-1]\ \bar{u}_R^\top[\tau-2]\
\ldots\ \bar{u}_R^\top[0]]}^\top$, $\bar{\pi}_\tau \in \overline{
\mathcal{M}}_\tau\subseteq\mathbb{R}^{(2\tau)},$ and
$\bar{u}_R[t]=\pi_{\mathrm{open},\bar{x}_R[0]}[t]$.   

We consider two cases for the dynamics of the robot G: 1) point mass dynamics,
and 2) double integrator dynamics, both discretized in time and perturbed
by an absolutely continuous random vector. In the former case, we presume that
the velocity is drawn from a bivariate Gaussian distribution, 
\begin{subequations}
    \begin{align}
        \bx_G[t+1]&=\bx_G[t]+B_\mathrm{G,PM}\bv_G[t] \label{eq:robotG_dyn_PM}\\
        \bv_G[t] &\sim \mathcal{N}(\bar{\mu}_G^{\bv},\Sigma_G). \label{eq:disturb_PM}
    \end{align}\label{eq:robotG_PM}%
\end{subequations}
The state (position) is the random vector $\bx_G[t]$ in the probability space
$( \mathcal{X}, \sigma( \mathcal{X}), \Prob_{\bx_G}^{t,\bar{x}_G[0]})$ with $\mathcal{X}=
\mathbb{R}^2$, disturbance matrix $B_\mathrm{G,PM}=B_R$, and $\bar{x}_G[0]$ as
the known initial state of the robot G. The stochastic
velocity $\bv_G[t]\in \mathbb{R}^{2}$ has mean vector
$\bar{\mu}_G^{\bv}$, covariance matrix $\Sigma_G$ and the CF with $\bar{\alpha}\in \mathbb{R}^{2}$ is given in
\eqref{eq:cfun_gauss}.
In the latter case, acceleration in each direction is an independent exponential random
variable,
\begin{subequations}
    \begin{align}
        \bx_G[t+1]&=A_\mathrm{G,DI}\bx_G[t]+B_\mathrm{G,DI}\ba[t] \label{eq:robotG_dyn_DI}\\
        {(\ba[t])}_\mathrm{x} &\sim
        \mathrm{Exp}(\lambda_\mathrm{ax}),\quad{(\ba[t])}_\mathrm{y} \sim \mathrm{Exp}(\lambda_\mathrm{ay}) \label{eq:acc2}\\
    A_\mathrm{G,DI}&= I_2\otimes \left[ {\begin{array}{cc} 1  &  T_s \\ 
                                                0  & 1 \\ 
                            \end{array} } \right],
                                                          B_\mathrm{G,DI}
                                                          = I_2\otimes \left[ {\begin{array}{c}
                                              \frac{T_s^2}{2} \\ 
                                                          T_s \\
                                              \end{array} } \right]. \nonumber
    \end{align}\label{eq:robotG_DI}%
\end{subequations}
The state (position and velocity) is the random vector $\bx_G[t]$ in the
probability space $(\mathcal{X}_\mathrm{DI}, \sigma( \mathcal{X}_\mathrm{DI}),
\Prob_{\bx_G}^{t,\bar{x}_G[0]})$ with $\mathcal{X}_\mathrm{DI}= \mathbb{R}^4$
and $\bar{x}_G[0]$ as the known initial state of the robot G. The stochastic
acceleration $\ba[t]={[{(\ba[t])}_\mathrm{x}\ {(\ba[t])}_\mathrm{y}]}^\top\in
\mathbb{R}^{2}_+=[0,\infty)\times[0,\infty)$ has the following probability
        density and CF ($\bar{z}={[z_1\ z_2]}^\top\in
        \mathbb{R}^2_+=[0,\infty)\times[0,\infty), \bar{\alpha}={[\alpha_1\
                \alpha_2]}^\top\in \mathbb{R}^2$),
\begin{align}
    \psi_{\ba}(\bar{z})&=\lambda_\mathrm{ax}\lambda_\mathrm{ay} \exp{(-\lambda_\mathrm{ax} z_1-\lambda_\mathrm{ay}
z_2)}\label{eq:exp_f}\\
\Psi_{\ba}(\bar{\alpha})&=\frac{\lambda_\mathrm{ax}\lambda_\mathrm{ay}}{(\lambda_\mathrm{ax}-j\alpha_1)(\lambda_\mathrm{ay}-j\alpha_2)}\label{eq:cfun_exp}.
\end{align}
The CF $\Psi_{\ba}(\bar{\alpha})$ is defined using Property
P3 and the CF of the exponential given
in~\cite[Section 26]{billingsley_probability_1995}.

Formally, the robot R captures robot G at time $\tau$ if $\bx_G[\tau]\in$\\$
\mathrm{CaptureSet}(\bar{x}_R[\tau])$. In other words, the capture region of the
robot R is the $\mathrm{CaptureSet}(\bar{y})\subseteq  \mathbb{R}^2$ when robot
R is at $\bar{y}\in \mathbb{R}^2$. The optimization problem to solve
Problem~\ref{prob:stochtarget} is
\begin{align}
 \begin{array}{rl}
     \mbox{ProbA}: &  \begin{array}{rcl}
     \operatorname*{\mathrm{maximize}}& &
     \mathrm{CapturePr}_{\bar{\pi}}(\tau,\bar{\pi}_\tau;\bar{x}_R[0],\bar{x}_G[0])\\
     \mbox{subject to}& & (\tau,\bar{\pi}_\tau)\in[1,T]\times
     \overline{\mathcal{M}}_\tau
   \end{array}
 \end{array} \nonumber
\end{align}
where the decision variables are the time of capture $\tau$ and the control
policy $\bar{\pi}$, and the objective function $
\mathrm{CapturePr}_{\bar{\pi}}(\cdot)$ gives the probability of robot R capturing
robot G. By \eqref{eq:lin_traj_robotR}, an initial state
$\bar{x}_R[0]$ and the control policy $\bar{\pi}_t$ determines a unique
$\bar{x}_R[\tau]$ for every $\tau$. Using this observation, we define the objective
function $\mathrm{CapturePr}_{\bar{\pi}}(\cdot)$ in \eqref{eq:intprob}.
We obtain $\psi_{\bx_G}$ in \eqref{eq:intprob} using our solution to
Problem~\ref{prob:basic}, Theorem~\ref{thm:FSRPD_def}. 
\begin{figure*}
\begin{align}
\mathrm{CapturePr}_{\bar{\pi}}(\tau,\bar{\pi}_\tau;\bar{x}_R[0],\bar{x}_G[0]
) =\mathrm{CapturePr}_{\bar{x}_R}(\tau,\bar{x}_R[\tau];\bar{x}_G[0])
&= \Prob_{\bx_G}^{\tau,\bar{x}_G[0]}\{\bx_G[\tau]\in
\mathrm{CaptureSet}(\bar{x}_R[\tau])\} \nonumber \\
&=\int_{\mathrm{CaptureSet}(\bar{x}_R[\tau])}
\psi_{\bx_G}(\bar{y};\tau,\bar{x}_G[0])d\bar{y}.\label{eq:intprob}
\end{align}\rule{\textwidth}{0.5pt}
\end{figure*}

Problem \probref{ProbA} is equivalent (see~\cite[Section 4.1.3]{boyd_convex_2004})  to
\begin{align}
 \begin{array}{rl}
     \mbox{ProbB}: &  \begin{array}{rcl}
     \operatorname*{\mathrm{maximize}}& & \mathrm{CapturePr}_{\bar{x}_R}(\tau,\bar{x}_R[\tau];\bar{x}_G[0])\\
     \mbox{subject to}& & \left\{\begin{array}{rl}
        \tau&\in[1,T] \\
        \bar{x}_R[\tau]&\in \mathrm{Reach}_R(\tau;\bar{x}_R[0]) \\
    \end{array}\right.
    \end{array}
 \end{array} \nonumber
\end{align}
where the decision variables are the time of capture $\tau$ and the position of
the robot R $\bar{x}_R[\tau]$ at time $\tau$. From \eqref{eq:lin_traj_robotR}, we
define the reach set for the robot R at time $\tau$ as
\begin{align}
    \mathrm{Reach}_R(\tau;\bar{x}_R[0])&=\big\{\bar{y}\in
    \mathcal{X}\vert\exists\bar{\pi}_\tau\in \overline{\mathcal{M}}_\tau\mbox{ s.t.
}\bar{x}_R[\tau]=\bar{y}\big\}.\nonumber 
\end{align}
Several deterministic reachability computation tools are available for the
computation of $\mathrm{Reach}_R(\tau;\bar{x}_R[0])$
, like MPT~\cite{MPT3} and ET~\cite{ellipsoid}. We will now formulate
Problem~\probref{ProbB} as a convex optimization problem based on the results
developed in Subsection~\ref{sub:cvx_FSRPD}.
\begin{lem}{\cite{kvasnica2015reachability}}
If the input space $ \mathcal{U}$ is convex, the forward reach set
$\mathrm{Reach}_R(\tau;\bar{x}_R[0])$ is convex. \label{lem:cvx_reachR}
\end{lem}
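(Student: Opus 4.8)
The plan is to recognize $\mathrm{Reach}_R(\tau;\bar{x}_R[0])$ as the image of a convex set under an affine map, and then appeal to the standard fact that affine maps preserve convexity. First I would use \eqref{eq:lin_traj_robotR}: for any admissible control sequence the terminal position is an affine function of the stacked input vector, $\bar{x}_R[\tau]=\bar{x}_R[0]+(\bar{1}_{1\times\tau}\otimes B_R)\bar{\pi}_\tau$. Combining this with the definition of the reach set in Problem~\probref{ProbB}, one gets
\begin{align}
\mathrm{Reach}_R(\tau;\bar{x}_R[0])=\big\{\bar{x}_R[0]+(\bar{1}_{1\times\tau}\otimes B_R)\bar{\pi} : \bar{\pi}\in\overline{\mathcal{M}}_\tau\big\},\nonumber
\end{align}
i.e.\ $\mathrm{Reach}_R(\tau;\bar{x}_R[0])$ is exactly the image of the stacked-input constraint set $\overline{\mathcal{M}}_\tau$ under the affine map $\bar{\pi}\mapsto\bar{x}_R[0]+(\bar{1}_{1\times\tau}\otimes B_R)\bar{\pi}$.

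It then remains to check the two ingredients. For the domain, I would note that an open-loop policy $\pi_{\mathrm{open},\bar{x}_R[0]}$ is nothing but a sequence of $\tau$ control actions, each constrained to lie in $\mathcal{U}$, so $\overline{\mathcal{M}}_\tau$ is the Cartesian product of $\tau$ copies of $\mathcal{U}$; a Cartesian product of convex sets is convex, so the hypothesis that $\mathcal{U}$ is convex makes $\overline{\mathcal{M}}_\tau$ convex. For the map, the image of a convex set under an affine map is convex (\cite[Section 2.3.2]{boyd_convex_2004}). Applying this to the set $\overline{\mathcal{M}}_\tau$ and the affine map displayed above yields that $\mathrm{Reach}_R(\tau;\bar{x}_R[0])$ is convex, which is the claim.

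There is no genuine obstacle here — the statement is essentially a bookkeeping argument, which is why it is quoted from \cite{kvasnica2015reachability}. The only points that warrant a little care are (i) correctly identifying $\overline{\mathcal{M}}_\tau$ with the product $\mathcal{U}\times\cdots\times\mathcal{U}$ (rather than some smaller policy class) so that its convexity follows coordinatewise from that of $\mathcal{U}$, and (ii) observing that the affine-image argument needs no boundedness of $\mathcal{U}$, so the result also covers the case of an unbounded input set.
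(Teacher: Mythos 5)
Your proof is correct. The paper itself gives no argument for this lemma --- it is stated with only a citation to \cite{kvasnica2015reachability} --- so there is no internal proof to compare against; your affine-image argument is the standard one and fills that gap cleanly. The two points you flag are exactly the right ones: $\overline{\mathcal{M}}_\tau$ is indeed the stacked product of $\tau$ copies of $\mathcal{U}$ (the paper's open-loop policy class imposes no coupling across time steps), and no boundedness of $\mathcal{U}$ is needed. The only blemish is inherited from the paper rather than introduced by you: \eqref{eq:lin_traj_robotR} writes $\bar{x}_R[\tau+1]$ in terms of $\bar{\pi}_\tau$, so when you restate the reach set as the image of $\overline{\mathcal{M}}_\tau$ you are implicitly adopting the index convention used in the definition of $\mathrm{Reach}_R(\tau;\bar{x}_R[0])$; this does not affect the convexity conclusion.
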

\begin{prop}
    If $\psi_{\bw}$ is a log-concave distributions and $\mathrm{CaptureSet}(\bar{y})$
    is convex for all $\bar{y}\in \mathcal{X}$, then\\
    $\mathrm{CapturePr}_{\bar{x}_R}(\tau,\bar{y};\bar{x}_G[0])$ is log-concave in
    $\bar{y}$ for all $\tau$.\label{prop:yield_logconcave}
\end{prop}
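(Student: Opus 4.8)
The plan is to realize $\mathrm{CapturePr}_{\bar{x}_R}$ as a marginal of a jointly log-concave function of $(\bar{z},\bar{y})$ and then invoke the classical fact that integrating out some variables preserves log-concavity. Writing $\bar{y}$ for the robot position $\bar{x}_R[\tau]$, equation \eqref{eq:intprob} gives $\mathrm{CapturePr}_{\bar{x}_R}(\tau,\bar{y};\bar{x}_G[0]) = \int_{\mathrm{CaptureSet}(\bar{y})} \psi_{\bx_G}(\bar{z};\tau,\bar{x}_G[0])\,d\bar{z} = \int_{\mathbb{R}^2} \psi_{\bx_G}(\bar{z};\tau,\bar{x}_G[0])\, \mathbf{1}_{\mathcal{K}}(\bar{z},\bar{y})\,d\bar{z}$, where $\mathcal{K} \triangleq \{(\bar{z},\bar{y}) : \bar{z} \in \mathrm{CaptureSet}(\bar{y})\}$. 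So it suffices to prove that the integrand $F(\bar{z},\bar{y}) \triangleq \psi_{\bx_G}(\bar{z};\tau,\bar{x}_G[0])\,\mathbf{1}_{\mathcal{K}}(\bar{z},\bar{y})$ is log-concave jointly in $(\bar{z},\bar{y})$ and then marginalize in $\bar{z}$.

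For the first factor, the robot G dynamics in \eqref{eq:robotG_PM} and \eqref{eq:robotG_DI} have invertible $A$ (the identity in the point-mass case, an upper-triangular block matrix of unit determinant in the double-integrator case) and $\psi_{\bw}$ is log-concave by hypothesis, so Theorem~\ref{thm:FSRPD_log_concave} gives that $\psi_{\bx_G}(\cdot;\tau,\bar{x}_G[0])$ is log-concave in $\bar{z}$; regarded as a function of $(\bar{z},\bar{y})$ it is still log-concave, since it does not depend on $\bar{y}$. The second factor $\mathbf{1}_{\mathcal{K}}$ is log-concave precisely when $\mathcal{K}$ is convex. Since a pointwise product of log-concave functions is log-concave \cite[Section 3.5.2]{boyd_convex_2004}, $F$ is jointly log-concave provided $\mathcal{K}$ is convex; and then, since integrating a jointly log-concave function over a subset of its arguments returns a log-concave function of the remaining ones \cite[Section 3.5.2]{boyd_convex_2004}, \cite[Theorem 2.5]{dharmadhikari1988unimodality}, $\mathrm{CapturePr}_{\bar{x}_R}(\tau,\cdot;\bar{x}_G[0]) = \int_{\mathbb{R}^2} F(\bar{z},\cdot)\,d\bar{z}$ is log-concave for every $\tau\in[1,T]$.

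The step I expect to be the main obstacle is the convexity of $\mathcal{K}$: the hypothesis only guarantees that each slice $\mathrm{CaptureSet}(\bar{y})$ is convex, which is strictly weaker than joint convexity of the graph $\mathcal{K}$. To close this gap I would use the structure built into Section~\ref{sec:stochTarget} — the capture region is a fixed, pre-determined convex set $\mathcal{C}$ rigidly attached to the robot, so $\mathrm{CaptureSet}(\bar{y}) = \{\bar{y}\} \oplus \mathcal{C}$; then $\mathcal{K} = \{(\bar{y}+\bar{s},\bar{y}) : \bar{s}\in\mathcal{C},\ \bar{y}\in\mathbb{R}^2\}$ is the image of the convex cylinder $\mathcal{C}\times\mathbb{R}^2$ under the linear map $(\bar{s},\bar{y})\mapsto(\bar{y}+\bar{s},\bar{y})$, hence convex, by the same \emph{affine maps preserve convexity} argument already used in the excerpt; equivalently, one may substitute $\bar{z}=\bar{y}+\bar{s}$ in the integral and marginalize over $\bar{s}\in\mathcal{C}$. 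A residual technical point: when $\mathscr{C}_{n\times(\tau p)}$ is rank-deficient (as can happen at small $\tau$ for the double integrator), $\psi_{\bx_G}$ degenerates to a Dirac mass on a lower-dimensional affine subspace, and log-concavity must then be read relative to that subspace; in the generic case where $\psi_{\bx_G}$ is an honest density on $\mathbb{R}^2$ the argument above is complete as stated.
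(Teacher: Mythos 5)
Your proof is correct and reaches the same conclusion by the same two ingredients the paper uses (Theorem~\ref{thm:FSRPD_log_concave} for log-concavity of the FSRPD, plus preservation of log-concavity under integration), but you unpack the second ingredient where the paper simply cites it. The paper's proof is one line: it invokes the statement that ``the integration of a log-concave function over a convex set is log-concave'' from~\cite[Section 3.5.2]{boyd_convex_2004}, i.e.\ the yield-function example, without comment. Your version proves that cited fact from scratch by writing $\mathrm{CapturePr}_{\bar{x}_R}(\tau,\bar{y};\bar{x}_G[0])$ as the $\bar{z}$-marginal of $\psi_{\bx_G}(\bar{z})\,\mathbf{1}_{\mathcal{K}}(\bar{z},\bar{y})$ and applying Pr\'ekopa's marginalization theorem, and in doing so you surface a point the paper glosses over: convexity of each slice $\mathrm{CaptureSet}(\bar{y})$ is strictly weaker than convexity of the graph $\mathcal{K}$, which is what the marginalization argument actually needs. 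You close that gap correctly using the translation structure $\mathrm{CaptureSet}(\bar{y})=\{\bar{y}\}\oplus\mathcal{C}$ (which holds for the $\mathrm{Box}(\bar{y},a)$ capture sets used in Section~\ref{sec:stochTarget}, and is implicitly what makes the Boyd yield-function example applicable). Your added remarks on invertibility of $A$ in both target models and on the degenerate rank-deficient case are also apt; the paper addresses neither. In short, what your route buys is rigor at exactly the step the paper treats as a black box; what the paper's route buys is brevity, at the cost of a hypothesis (``$\mathrm{CaptureSet}(\bar{y})$ convex for all $\bar{y}$'') that, read literally, does not quite suffice without the rigid-translation structure you supply.
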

\begin{proof}
    From Theorem~\ref{thm:FSRPD_log_concave}, we know that
    $\psi_{\bx}(\bar{y};\tau,\bar{x}_R[0])$ is log-concave in $\bar{y}$ for every
    $\tau$. The proof follows from \eqref{eq:intprob} 
    since the integration of a log-concave function over a convex set is
    log-concave~\cite[Section 3.5.2]{boyd_convex_2004}.
\end{proof}
\begin{rem}
The densities $\psi_{\bv}$ and $\psi_{\ba}$ are log-concave since multivariate
Gaussian density and exponential distribution 
(gamma distribution with shape parameter $p=1$) 
are log-concave, and
log-concavity is preserved for products~\cite[Sections 1.4, 2.3]{dharmadhikari1988unimodality}\cite[Section 3.5.2]{boyd_convex_2004}.      
\end{rem}

For any $\tau\in[1,T]$, Proposition~\ref{prop:yield_logconcave} and
Lemma~\ref{lem:cvx_reachR} ensure
\begin{align}
  \begin{array}{rl}
    \mbox{ProbC}: &  \begin{array}{rcl}
      \mbox{minimize}& &
      -\log(\mathrm{CapturePr}_{\bar{x}_R}(\tau,\bar{x}_R[\tau];\bar{x}_G[0])) \\
      \mbox{subject to}& & \bar{x}_R[\tau]\in \mathrm{Reach}_R(\tau;\bar{x}_R[0]) \\
    \end{array}
  \end{array}\nonumber 
\end{align}
is convex with the decision variable
$\bar{x}_R[\tau]$. Problem \probref{ProbC} is an equivalent convex optimization
problem of the partial maximization with respect to $\bar{x}_R[\tau]$ of Problem
\probref{ProbB} since we have transformed the original objective function with a
monotone function to yield a convex objective and the constraint sets are
identical~\cite[Section 4.1.3]{boyd_convex_2004}.  

We solve Problem~\probref{ProbB} by solving Problem~\probref{ProbC} for each
time instant $\tau\in[1,T]$ to obtain $\bar{x}_R^\ast[\tau]$ and compute the
maximum of the resulting finite set to get
$(\tau^\ast,\bar{x}_R^\ast[\tau^\ast])$. Since Problem~\probref{ProbB} could be
non-convex, this approach ensures a global optimum is found.
Note that in order to prevent taking the logarithm of zero,
we add an additional constraint to Problem~\probref{ProbC} 
\begin{align}
    \mathrm{CapturePr}_{\bar{x}_R}(\tau,\cdot;\bar{x}_G[0])&\geq
    \epsilon. \label{eq:add_const}
\end{align}
The constraint \eqref{eq:add_const} does not affect its convexity ($\epsilon$ is
a small positive number) from Proposition~\ref{prop:yield_logconcave} and the
fact that log-concave functions are quasiconcave. Quasiconcave functions have
convex superlevel sets~\cite[Sections 3.4, 3.5]{boyd_convex_2004}.

Using the optimal solution of Problem \probref{ProbB}, we can compute the
open-loop controller to drive the robot R from $\bar{x}_R[0]$ to
$\bar{x}^\ast_R[\tau^\ast]$ by solving Problem \probref{ProbD}.  Defining
$\mathscr{C}_R=(\bar{1}_{1\times (\tau^\ast-1)}\otimes B_R)$ from
\eqref{eq:lin_traj_robotR},
\begin{align}
    \begin{array}{rl}
      \mbox{ProbD}:  &  \begin{array}{rcl}\mbox{minimize}& & J_\pi(\bar{\pi}_{\tau^\ast}) \\
                        \mbox{subject to}& & \left\{\begin{array}{rl}
                        \bar{\pi}_{\tau^\ast}&\in \overline{\mathcal{M}}_{\tau^\ast} \\
                        \mathscr{C}_R\bar{\pi}_{\tau^\ast} &=\bar{x}_R[\tau^\ast]-\bar{x}_R[0] \\
                      \end{array}\right.
                    \end{array}\nonumber 
    \end{array}
\end{align}
where the decision variable is $\bar{\pi}_{\tau^\ast}$. The objective function
$J_\pi(\bar{\pi}_{\tau^\ast})=0$ provides a feasible open-loop controller, and
$J_\pi(\bar{\pi}_{\tau^\ast})=\bar{\pi}^\top_{\tau^\ast} \bar{R}
\bar{\pi}_{\tau^\ast},\bar{R}\in \mathbb{R}^{(2\tau^\ast)\times(2\tau^\ast)}$
provides an open-loop controller policy that minimizes the control effort while
ensuring that maximum probability of robot R capturing robot G is achieved.
Solving the optimization problems \probref{ProbB} and \probref{ProbD} answers
Problem~\ref{prob:stochtarget}.

Our approach to solving Problem~\ref{prob:stochtarget} is based on our solution
to Problem~\ref{prob:basic}, the Fourier transform based FSR analysis, and
Problem~\ref{prob:suff_cond}, the convexity results of the FSRPD and the FSR
sets presented in this paper.  In contrast, the iterative approach for the FSR
analysis, presented in Subsection~\ref{sub:iter}, would yield erroneous
$\mathrm{CapturePr}_{\bar{x}_R}(\tau,\cdot;\bar{x}_G[0])$ for larger values of
$\tau$ due to the heavy reliance on quadrature techniques. Additionally, the
traditional approach of dynamic programming based
computations~\cite{summers_verification_2010} would be prohibitively costly for
the large FSR sets encountered in this problem due to unbounded disturbances.
The numerical implementation of this work is discussed in
Subsection~\ref{sub:numImp}.

\begin{figure}
    \raggedleft
    \newcommand{\figwidthSnap}{0.28\textwidth}
    \begin{subfigure}{\figwidthSnap}
        \includegraphics[width=1\linewidth]{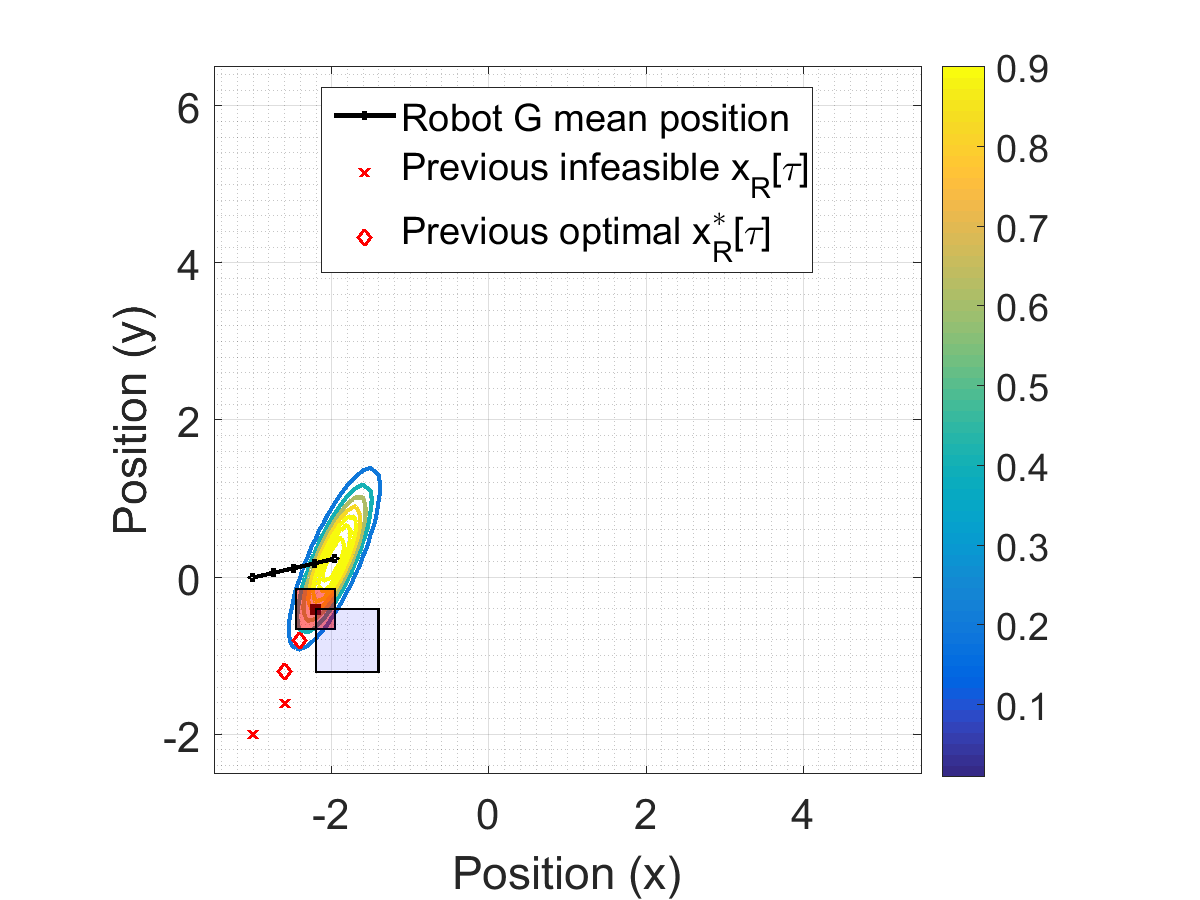}        
        \tikzset{
        every picture/.style={remember picture,baseline},
        }
        \begin{tikzpicture}[remember picture,overlay] 
            \node [text centered,text=black] at (-4.8em,7em) (A) {\footnotesize $\begin{aligned}
                \mathrm{Time}&=4\\
                \mathrm{CapturePr}_{\bar{x}_R}^\ast&=0.1571
            \end{aligned}$};
    \end{tikzpicture}
    \end{subfigure}

    \begin{subfigure}{\figwidthSnap}
        \includegraphics[width=1\linewidth]{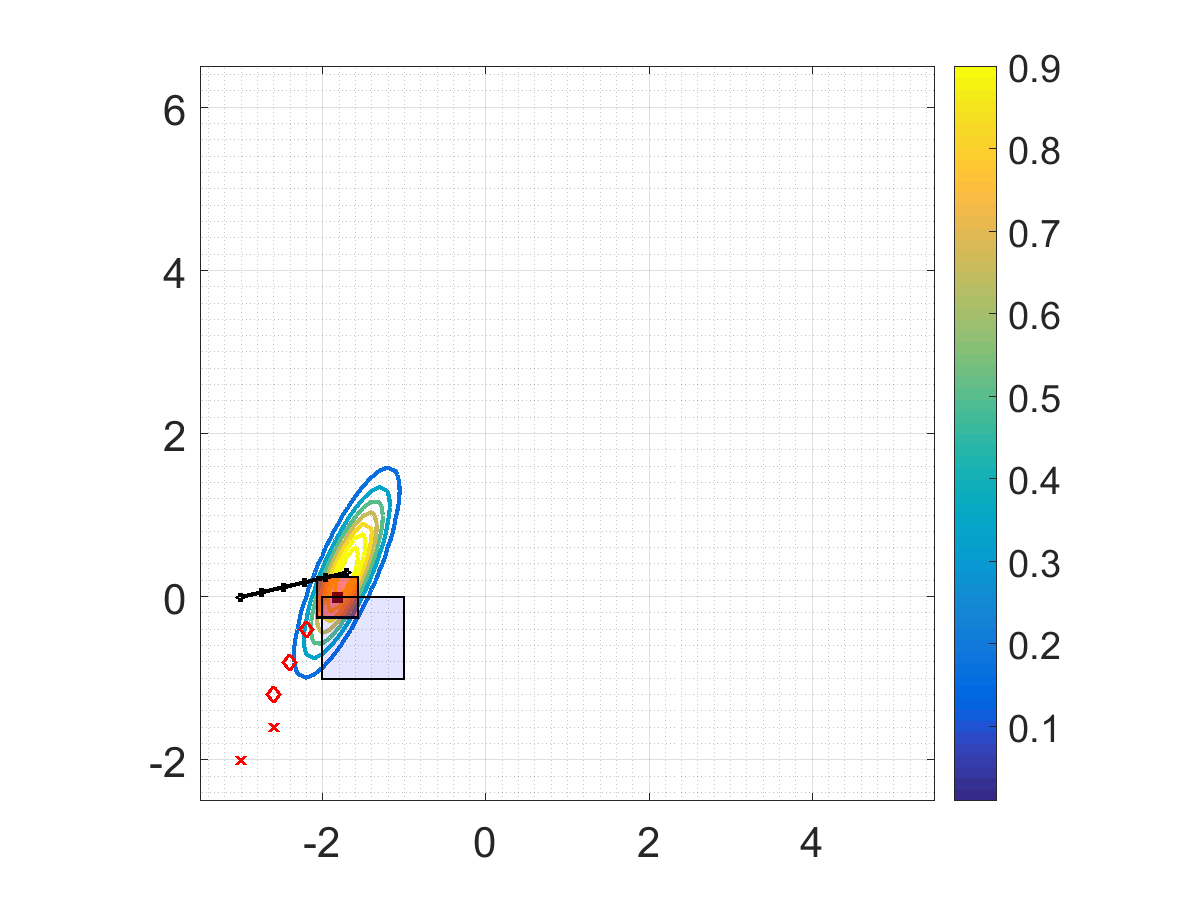}        
        \begin{tikzpicture}[remember picture,overlay] 
            \node [text centered,text=black] at (-4.8em,7em) (A) {\footnotesize $\begin{aligned}
                \mathrm{Time}&=5\\
                \mathrm{CapturePr}_{\bar{x}_R}^\ast&=0.219
            \end{aligned}$};
    \end{tikzpicture}
    \end{subfigure}
    ~ 
    \begin{subfigure}{\figwidthSnap}
        \includegraphics[width=1\linewidth]{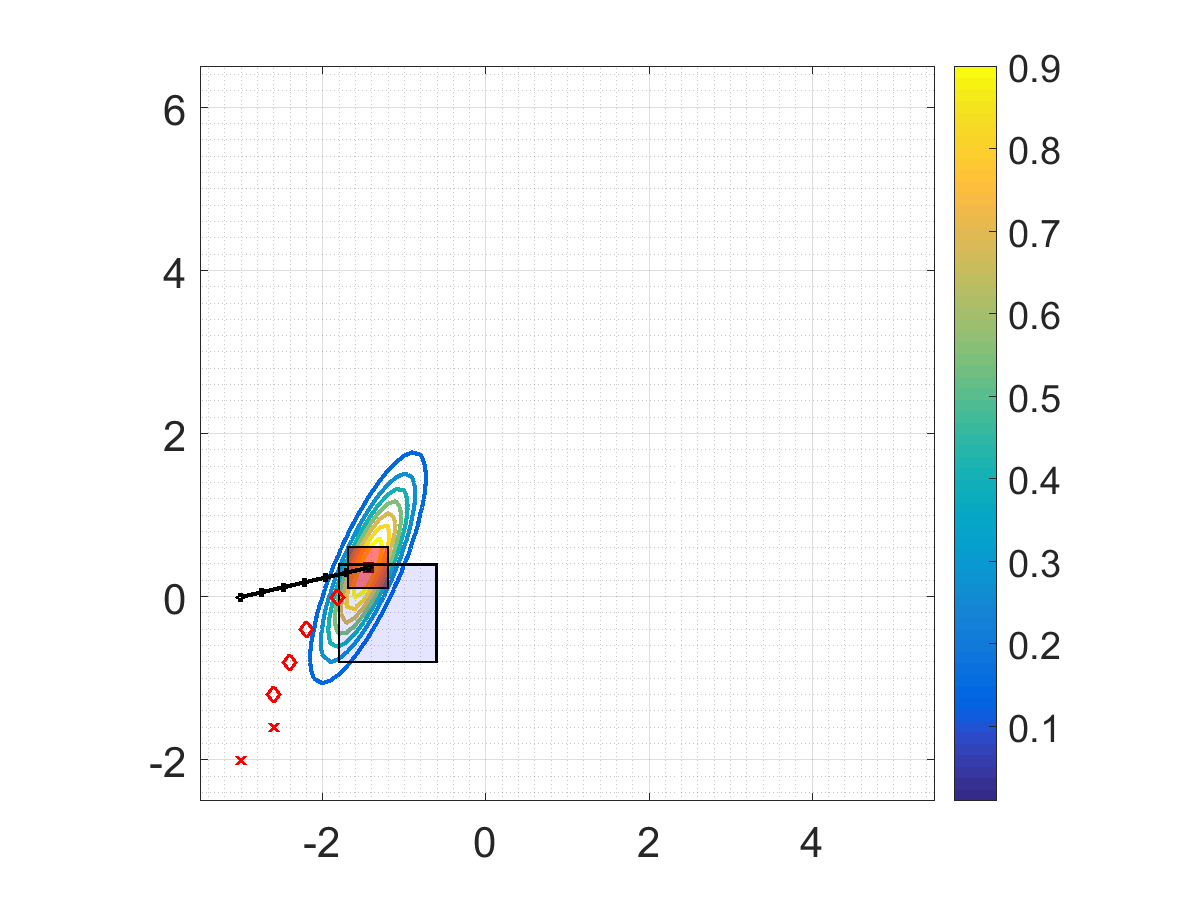} \label{fig:sim16}
        \begin{tikzpicture}[remember picture,overlay] 
            \node [text centered,text=black] at (-4.8em,7em) (A) {\footnotesize $\begin{aligned}
                \mathrm{Time}&=6\\
                \mathrm{CapturePr}_{\bar{x}_R}^\ast&=0.2124
            \end{aligned}$};
    \end{tikzpicture}
    \end{subfigure}    
    ~ 
    \begin{subfigure}{\figwidthSnap}
        \includegraphics[width=1\linewidth]{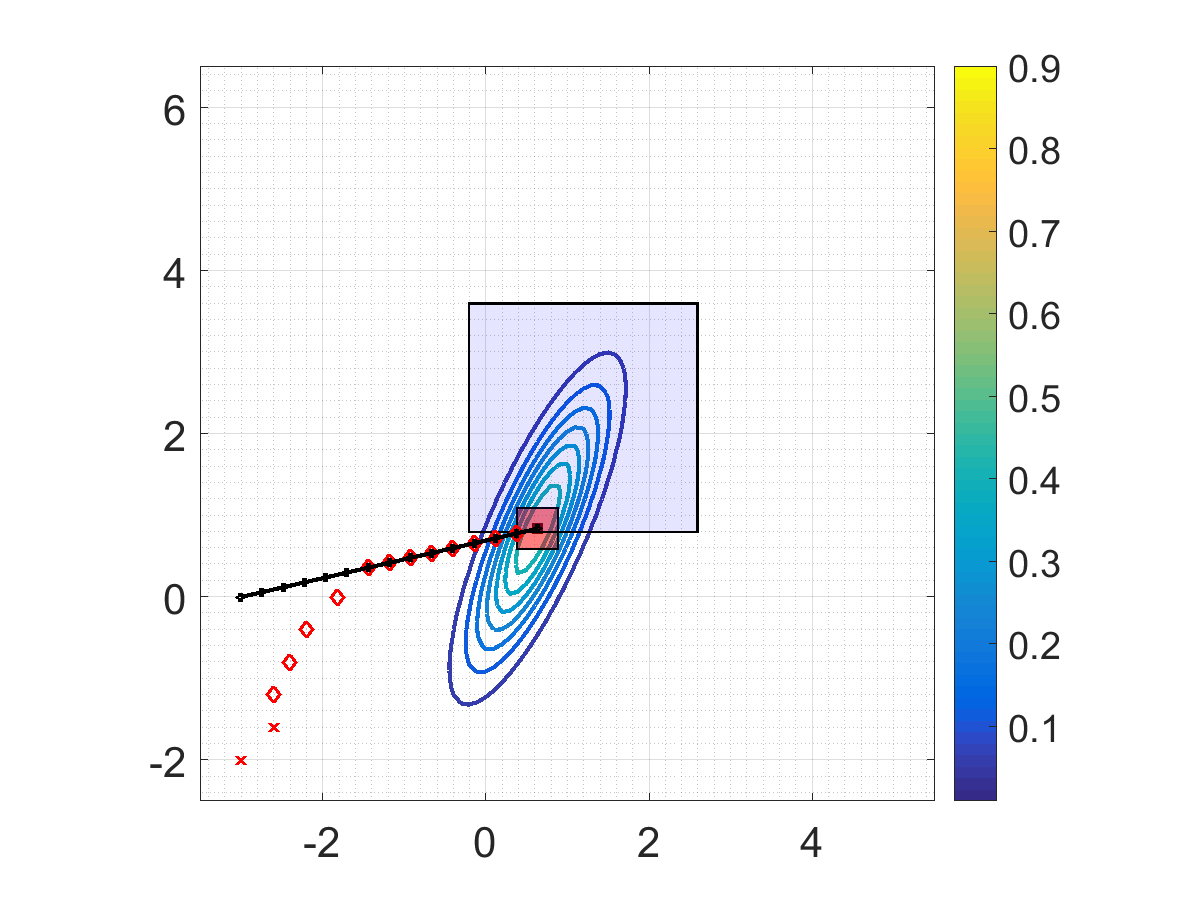}        
        \begin{tikzpicture}[remember picture,overlay] 
            \node [text centered,text=black] at (-4.8em,7em) (A) {\footnotesize $\begin{aligned}
                \mathrm{Time}&=14\\
                \mathrm{CapturePr}_{\bar{x}_R}^\ast&=0.1049
            \end{aligned}$};
    \end{tikzpicture}
    \end{subfigure}
    ~ 
    \begin{subfigure}{\figwidthSnap}
        \includegraphics[width=1\linewidth]{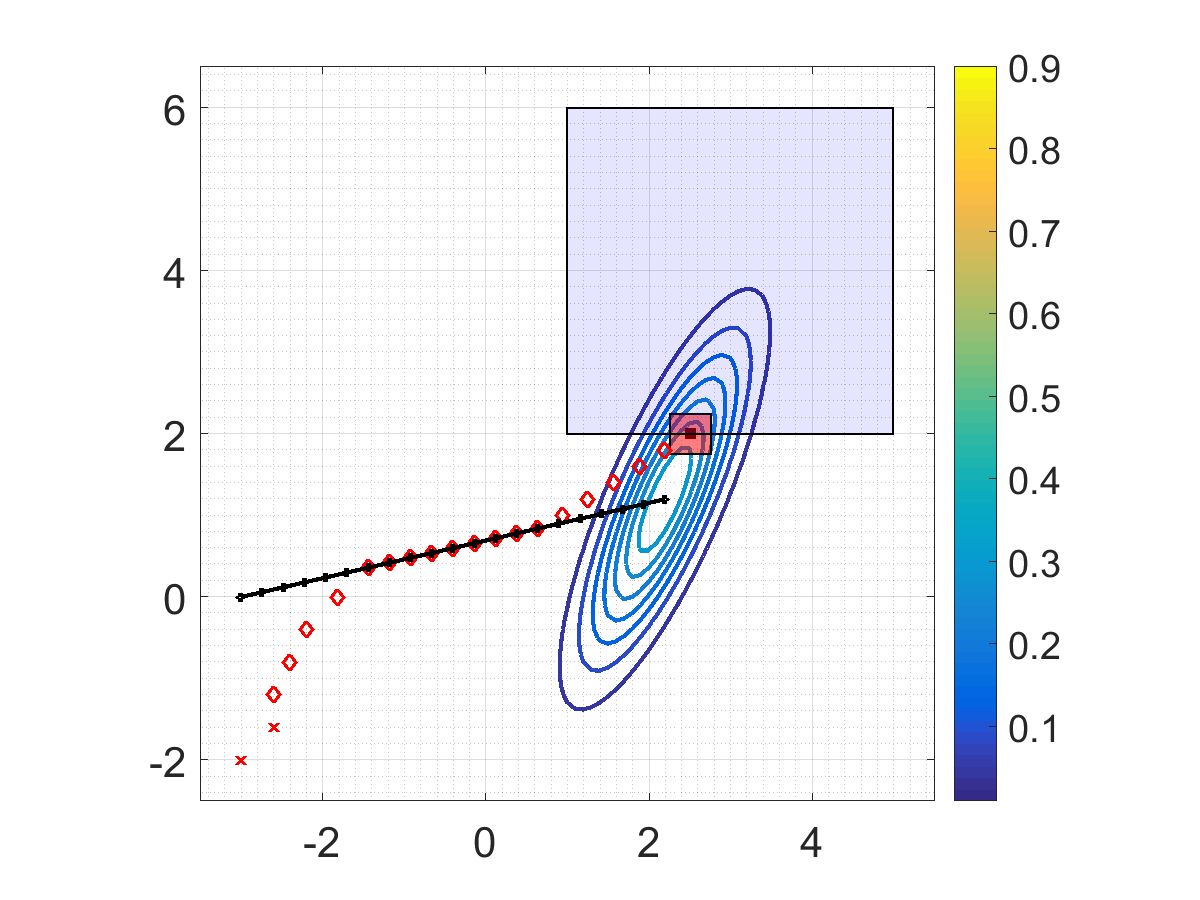} \label{fig:sim120}
        \begin{tikzpicture}[remember picture,overlay] 
            \node [text centered,text=black] at (-4.8em,7em) (A) {\footnotesize $\begin{aligned}
                \mathrm{Time}&=20\\
                \mathrm{CapturePr}_{\bar{x}_R}^\ast&=0.0624
            \end{aligned}$};
    \end{tikzpicture}
     \end{subfigure}    
     \caption{Snapshots of optimal capture positions of the robots G and R  when
         G has point mass dynamics~\eqref{eq:robotG_PM}. The blue line shows the
         mean position trajectory of robot G $\mu_G[\tau]$, the contour plot
         characterizes $\psi_{\bx_G}(\cdot;\tau,\bar{x}_G[0])$, the blue box shows
         the reach set of the robot R at time $\tau$
     $\mathrm{Reach}_R(\tau,\bar{x}_R[0])$, and the red box shows the capture
 region centered at $\bar{x}_R^\ast[\tau]$
 $\mathrm{CaptureSet}(\bar{x}_R^\ast[\tau])$.  }\label{fig:SIM2_gauss_snapshot}
\end{figure}

\subsection{Robot G with point mass dynamics}
\label{sub:gauss_PM}

\begin{figure}
    \centering 
\includegraphics[width=0.8\linewidth]{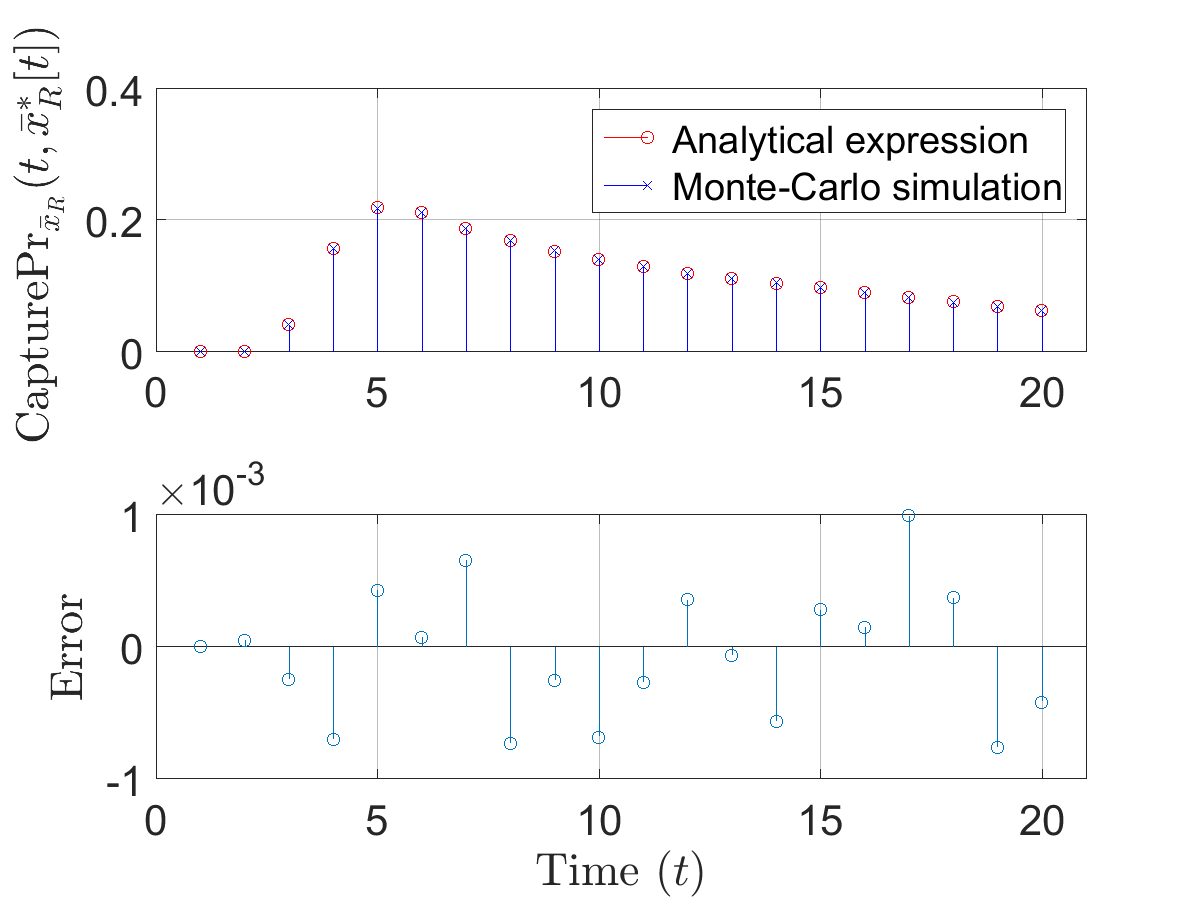} 
    \caption{Solution to Problem \probref{ProbC} for robot G dynamics in \eqref{eq:robotG_PM}, and validation of
        $\mathrm{CapturePr}_{\bar{x}_R}(\tau,\bar{x}_R^\ast[\tau];\bar{x}_G[0])$ via
        Monte-Carlo simulations. The optimal capture
        time is $\tau^\ast=5$ and the likelihood of capture is
$\mathrm{CapturePr}_{\bar{x}_R}(\tau^\ast,\bar{x}_R^\ast[\tau^\ast];\bar{x}_G[0])=0.219$.}\label{fig:SIM2_gauss_time}
\end{figure}

We solve Problem \probref{ProbB} for the system given by \eqref{eq:robotG_PM}.
Here, the disturbance set is $ \mathcal{W}= \mathbb{R}^2$.
\begin{lem}\label{lem:FSRset_gauss}
    For the system given in \eqref{eq:robotG_PM} and initial state of the robot G
    as $\bar{x}_G[0] \in \mathbb{R}^2$, $\FSRset_G(\tau,\bar{x}_G[0])=  \mathbb{R}^2$
    for every $\tau\in[1,T]$.
\end{lem}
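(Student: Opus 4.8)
The plan is to obtain the claim directly from Proposition~\ref{prop:FSRPD_gauss}: once we know $\bxi_G[\tau;\bar{x}_G[0]]$ is a \emph{non-degenerate} Gaussian random vector on $\mathbb{R}^2$, its density is strictly positive everywhere, so the set in \eqref{eq:FSRset} is already $\mathbb{R}^2$ (hence equal to its own closure). The only substantive point is to verify that the covariance matrix produced by Proposition~\ref{prop:FSRPD_gauss} is positive definite.

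Concretely, I would first specialize \eqref{eq:mu_robotG_t}--\eqref{eq:sigma_robotG_t} to the dynamics \eqref{eq:robotG_PM}. Here $n=p=2$, the system matrix is $A_{\mathrm{G,PM}}=I_2$, and $B_{\mathrm{G,PM}}=T_sI_2$, so $\mathscr{C}_{2\times(2\tau)}=[B_{\mathrm{G,PM}}\ B_{\mathrm{G,PM}}\ \ldots\ B_{\mathrm{G,PM}}]$ with $\tau$ identical blocks. Therefore
\begin{align}
    \Sigma[\tau]=\mathscr{C}_{2\times(2\tau)}(I_\tau\otimes\Sigma_G)\mathscr{C}_{2\times(2\tau)}^\top
    =\sum_{t=0}^{\tau-1}B_{\mathrm{G,PM}}\Sigma_G B_{\mathrm{G,PM}}^\top
    =\tau T_s^2\,\Sigma_G. \nonumber
\end{align}
Since $\Sigma_G\succ 0$ (the velocity is drawn from a genuine bivariate Gaussian density, hence its covariance is nonsingular), $T_s>0$, and $\tau\geq 1$, the matrix $\Sigma[\tau]$ is positive definite for every $\tau\in[1,T]$.

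Next I would write out the resulting density explicitly. By Proposition~\ref{prop:FSRPD_gauss}, $\psi_{\bx_G}(\bar{y};\tau,\bar{x}_G[0])=\big(2\pi\sqrt{\det\Sigma[\tau]}\big)^{-1}\exp\!\big(-\tfrac12(\bar{y}-\bar{\mu}[\tau])^\top\Sigma[\tau]^{-1}(\bar{y}-\bar{\mu}[\tau])\big)$, which is strictly positive for every $\bar{y}\in\mathbb{R}^2$ because $\Sigma[\tau]^{-1}$ exists. Hence $\{\bar{y}\in\mathbb{R}^2\mid\psi_{\bx_G}(\bar{y};\tau,\bar{x}_G[0])>0\}=\mathbb{R}^2$, and since $\mathbb{R}^2$ is closed, \eqref{eq:FSRset} gives $\FSRset_G(\tau,\bar{x}_G[0])=\mathrm{cl}(\mathbb{R}^2)=\mathbb{R}^2$. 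For the reverse inclusion (which is automatic) one may also note that Lemma~\ref{lem:FSRset_oplus} with the closed set $\mathcal{W}=\mathbb{R}^2$ and the full-row-rank matrix $\mathscr{C}_{2\times(2\tau)}$ already yields $\FSRset_G(\tau,\bar{x}_G[0])\subseteq\{A^\tau\bar{x}_G[0]\}\oplus\mathscr{C}_{2\times(2\tau)}\mathbb{R}^{2\tau}=\mathbb{R}^2$.

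The main (and only mild) obstacle is the positive-definiteness of $\Sigma[\tau]$: it rests on $\Sigma_G$ being nonsingular and on $B_{\mathrm{G,PM}}$ being invertible, both of which hold here; without nonsingularity of $\Sigma_G$ the support would collapse to a proper affine subspace and the conclusion would fail, so this is the step to state carefully. Everything else is a direct substitution into Proposition~\ref{prop:FSRPD_gauss} and \eqref{eq:FSRset}.
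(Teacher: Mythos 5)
Your proof is correct and follows the same route as the paper, which simply cites Proposition~\ref{prop:FSRPD_gauss} and \eqref{eq:FSRset}; you have filled in the details, including the computation $\Sigma[\tau]=\tau T_s^2\,\Sigma_G$ and the observation that strict positivity of the Gaussian density requires $\Sigma_G\succ 0$, a non-degeneracy assumption the paper leaves implicit.
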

\begin{proof}
    Follows from Proposition~\ref{prop:FSRPD_gauss} and \eqref{eq:FSRset}. 
\end{proof}
Proposition~\ref{prop:FSRPD_gauss} provides the FSRPD and
Lemma~\ref{lem:FSRset_gauss} provides the FSR set for the system
\eqref{eq:robotG_PM}.
The probability of successful capture of the robot G can be computed
using \eqref{eq:intprob} since the FSRPD $\psi_{\bx_G}(\cdot;\tau,\bar{x}_G[0])$ is
available. 

We implement the problem with the following parameters: $T_s=0.2$, $T=20$,
$\bar{\mu}_G^{\bv}=[1.3\ 0.3]^\top$, $\Sigma_G= \left[ {\begin{array}{cc} 0.5  &
0.8 \\ 0.8  & 2  \end{array} } \right]$, $\bar{x}_G[0]= [-3\ 0]^\top$,
$\bar{x}_R[0]=[-3\ -2]^\top$ and $ \mathcal{U}=[1,2]^2$. The capture region of
the robot R is a box centered about the position of the robot $\bar{y}$ with
edge length $2a$ ($a=0.25$) and edges parallel to the axes ---
$\mathrm{CaptureSet}(\bar{y})=\mathrm{Box}(\bar{y},a)$, a convex set. 
We use $J_\pi(\bar{\pi})=0$ in Problem~\probref{ProbD}.

Figure~\ref{fig:SIM2_gauss_snapshot} shows the evolution of the mean position of
the robot G and the optimal capture position for the robot R at time instants
$4,5,8,14,$ and $20$. The contour plots of
$\psi_{\bx_G}(\cdot;\tau,\bar{x}_G[0])$ are rotated ellipses since $\Sigma_E$ is
not a diagonal matrix. From \eqref{eq:mu_robotG_t}, the mean position of the
robot G moves in a straight line $\mu_G[\tau]$, as it is the trajectory of
\eqref{eq:robotG_dyn_PM} when the input is always $\bar{\mu}_G^{\bv}$. The
optimal time of capture is $\tau^\ast=5$, the optimal capture position is
$\bar{x}_R^\ast[\tau^\ast]=[-1.8\ 0]^\top$, and the corresponding probability of
robot R capturing robot G is $0.219$.  Note that at this instant, the reach set
of the robot R does not cover the current mean position of the robot G,
$\bar{\mu}[\tau^\ast]=[-1.7\ 0.3]^\top$ (Figure~\ref{fig:SIM2_gauss_snapshot}b).
While the reach set covers the mean position of robot G at the next time instant
$t=6$, the uncertainty in \eqref{eq:robotG_PM} causes the probability of
successful capture to further reduce (Figure~\ref{fig:SIM2_gauss_snapshot}c).
Counterintuitively, attempting to reach the mean $\mu_G[\tau]$ is not always best.
Figure~\ref{fig:SIM2_gauss_time} shows the optimal capture probabilities
obtained when solving Problem \probref{ProbC} for the
dynamics~\eqref{eq:robotG_PM}.

\subsection{Robot G with double integrator dynamics}
\label{sub:exp_DI}

\begin{figure}
    \raggedleft
    \newcommand{\figwidthSnap}{0.27\textwidth}
    \begin{subfigure}{\figwidthSnap}
        \includegraphics[width=1\linewidth]{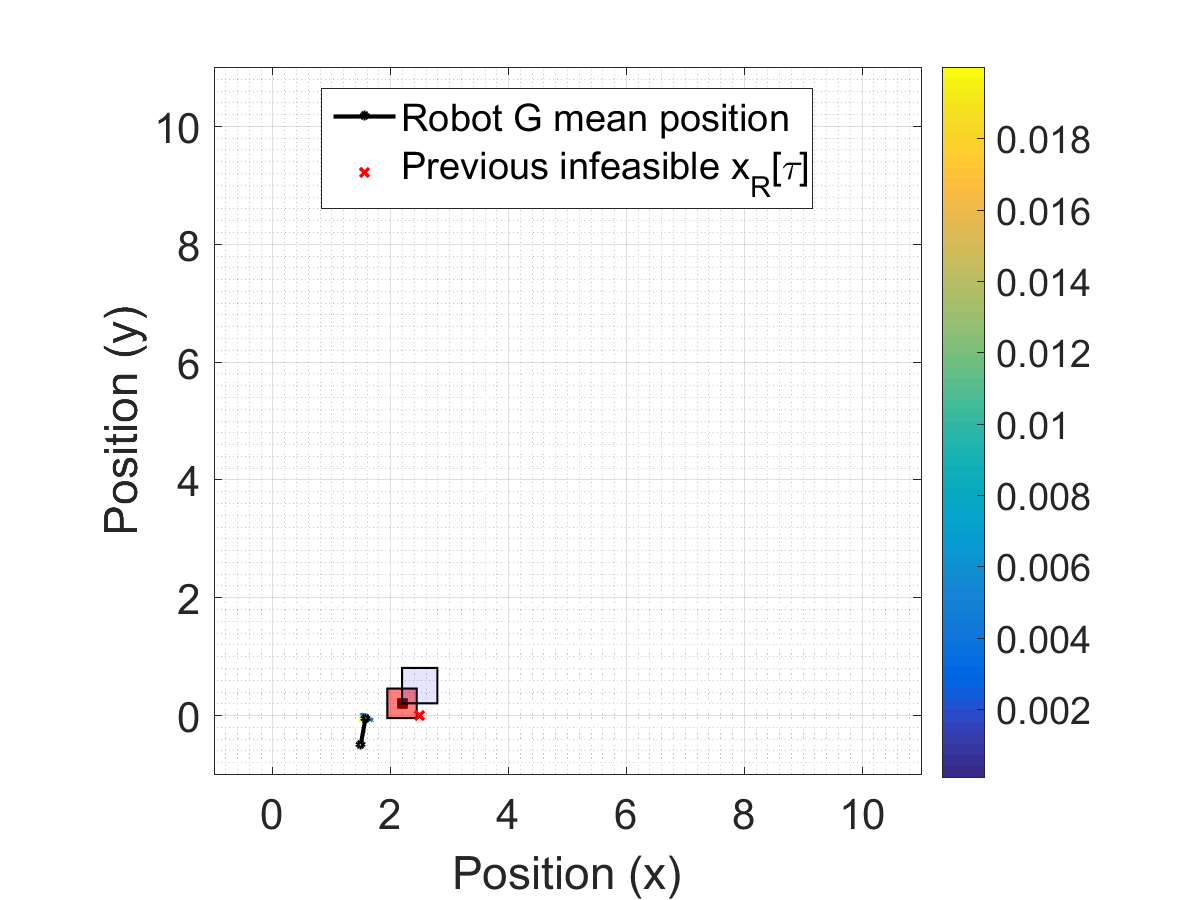}        
        \tikzset{
        every picture/.style={remember picture,baseline},
        }
        \begin{tikzpicture}[remember picture,overlay] 
            \node [text centered,text=black] at (-4.8em,7em) (A) {\footnotesize $\begin{aligned}
                \mathrm{Time}&=1\\
                    \mathrm{CapturePr}_{\bar{x}_R}^\ast&=0\\
                    (\mbox{Inf}&\mbox{easible})
            \end{aligned}$};
    \end{tikzpicture}
    \end{subfigure}

    \begin{subfigure}{\figwidthSnap}
        \includegraphics[width=1\linewidth]{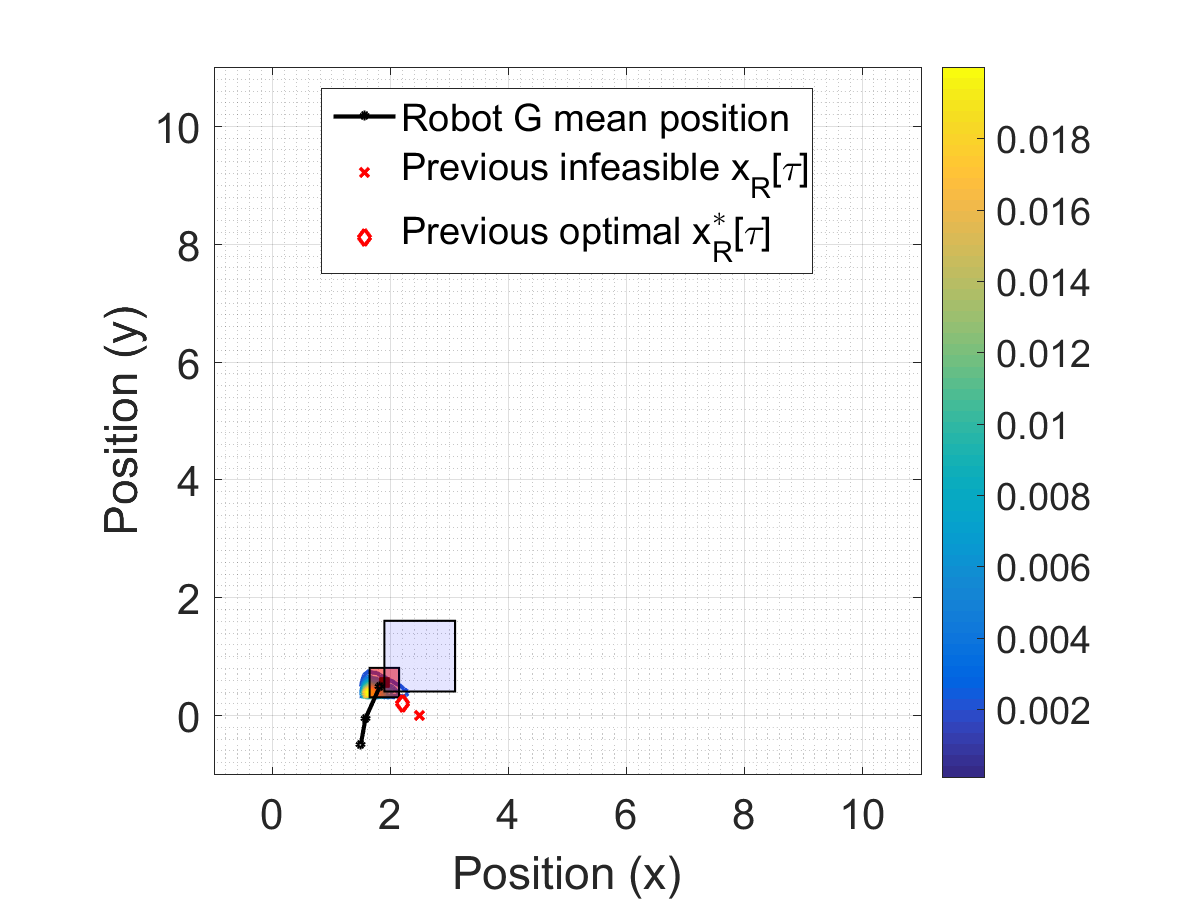}        
        \begin{tikzpicture}[remember picture,overlay] 
            \node [text centered,text=black] at (-4.8em,7em) (A) {\footnotesize $\begin{aligned}
                \mathrm{Time}&=2\\
                \mathrm{CapturePr}_{\bar{x}_R}^\ast&=0.6044
            \end{aligned}$};
    \end{tikzpicture}
    \end{subfigure}
    ~ 
    \begin{subfigure}{\figwidthSnap}
        \includegraphics[width=1\linewidth]{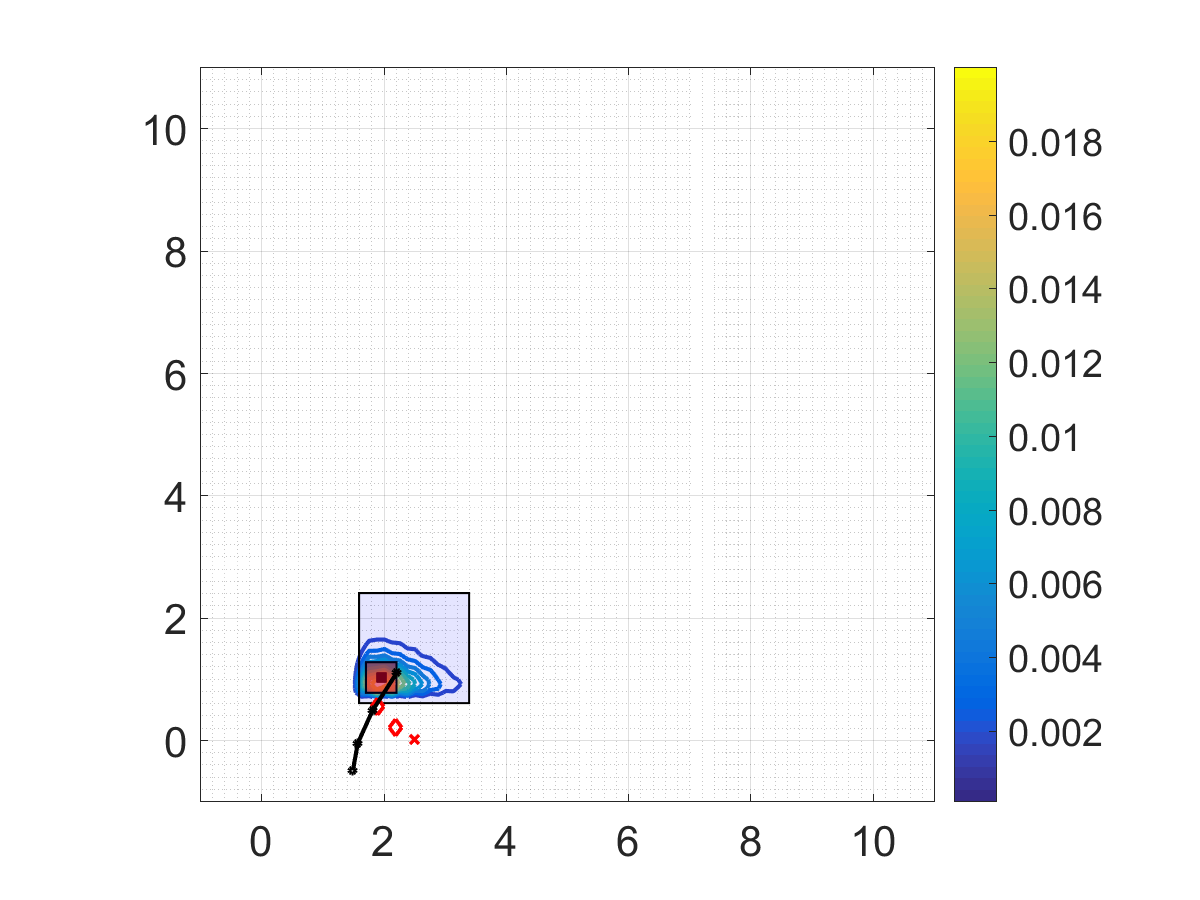} \label{fig:sim23}
        \begin{tikzpicture}[remember picture,overlay] 
            \node [text centered,text=black] at (-4.8em,7em) (A) {\footnotesize $\begin{aligned}
                \mathrm{Time}&=3\\
                \mathrm{CapturePr}_{\bar{x}_R}^\ast&=0.3885
            \end{aligned}$};
    \end{tikzpicture}
    \end{subfigure}    
    ~ 
    \begin{subfigure}{\figwidthSnap}
        \includegraphics[width=1\linewidth]{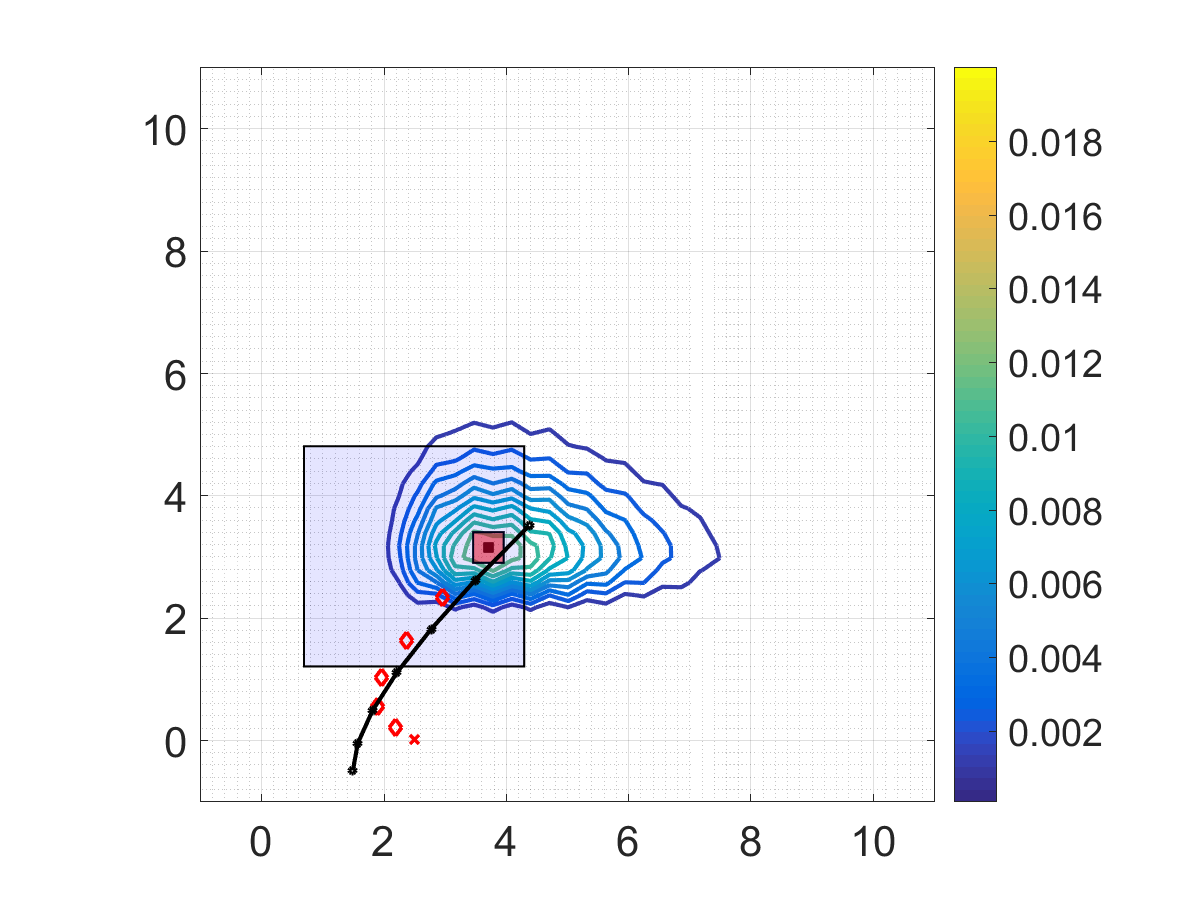}        
        \begin{tikzpicture}[remember picture,overlay] 
            \node [text centered,text=black] at (-4.8em,7em) (A) {\footnotesize $\begin{aligned}
                \mathrm{Time}&=6\\
                \mathrm{CapturePr}_{\bar{x}_R}^\ast&=0.0495
            \end{aligned}$};
    \end{tikzpicture}
    \end{subfigure}
    ~ 
    \begin{subfigure}{\figwidthSnap}
        \includegraphics[width=1\linewidth]{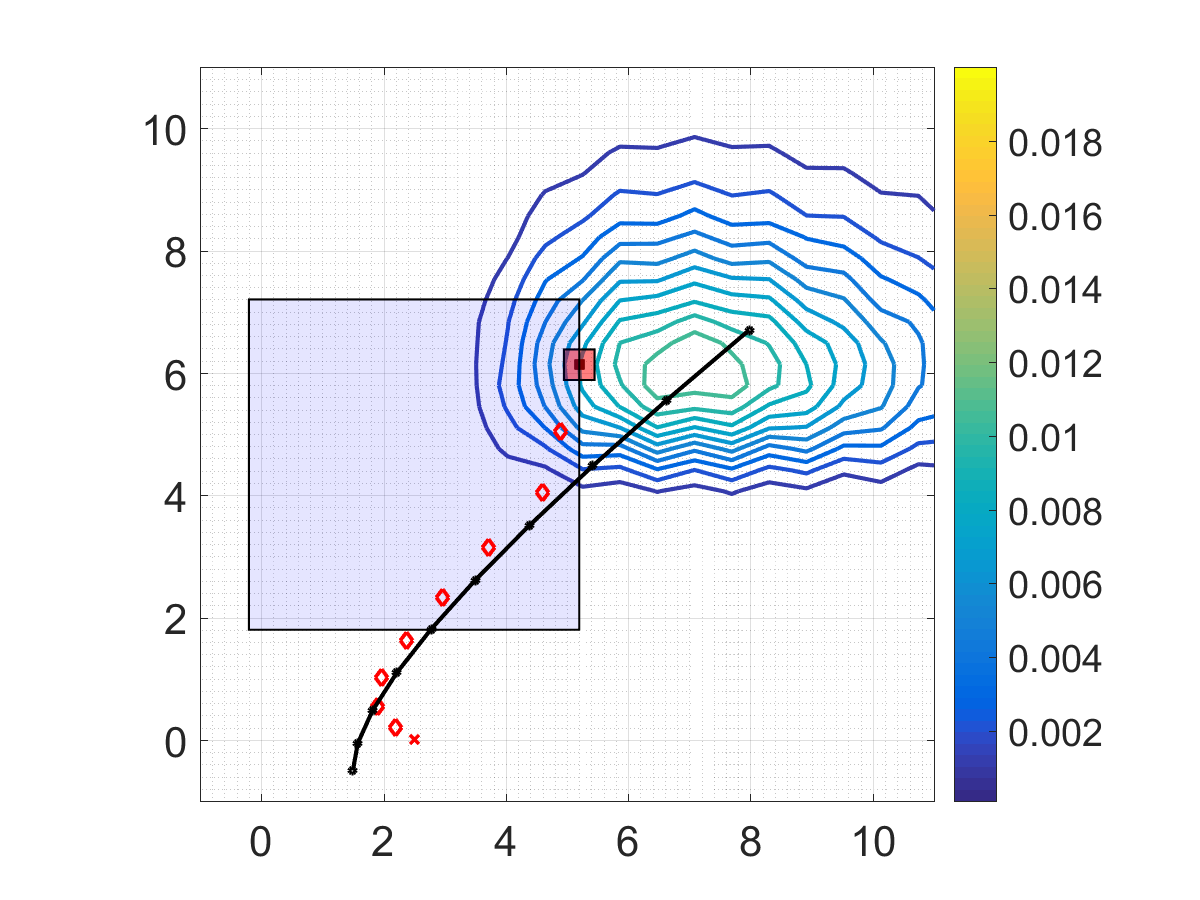} \label{fig:sim29}
        \begin{tikzpicture}[remember picture,overlay] 
            \node [text centered,text=black] at (-4.8em,7em) (A) {\footnotesize $\begin{aligned}
                \mathrm{Time}&=9\\
                \mathrm{CapturePr}_{\bar{x}_R}^\ast&=0.0091
            \end{aligned}$};
    \end{tikzpicture}
     \end{subfigure}    
     \caption{Snapshots of optimal capture positions of the robots G and R when
         G has double integrator dynamics~\eqref{eq:robotG_DI}. The
         blue line shows the mean position trajectory of robot G $\mu_G[\tau]$, the
         contour plot characterizes
         $\psi_{\bx_G}^\mathrm{pos}(\cdot;\tau,\bar{x}_G[0])$ via Monte-Carlo
         simulation,
         the blue box shows the reach set of
         the robot R at time $\tau$, $\mathrm{Reach}_R(\tau,\bar{x}_R[0])$, and the red
         box shows the capture region centered at $\bar{x}_R^\ast[\tau]$,
     $\mathrm{CaptureSet}(\bar{x}_R^\ast[\tau])$.  }\label{fig:SIM2_exp_snapshot}
\end{figure}

\begin{figure}
    \centering 
    \includegraphics[width=0.8\linewidth]{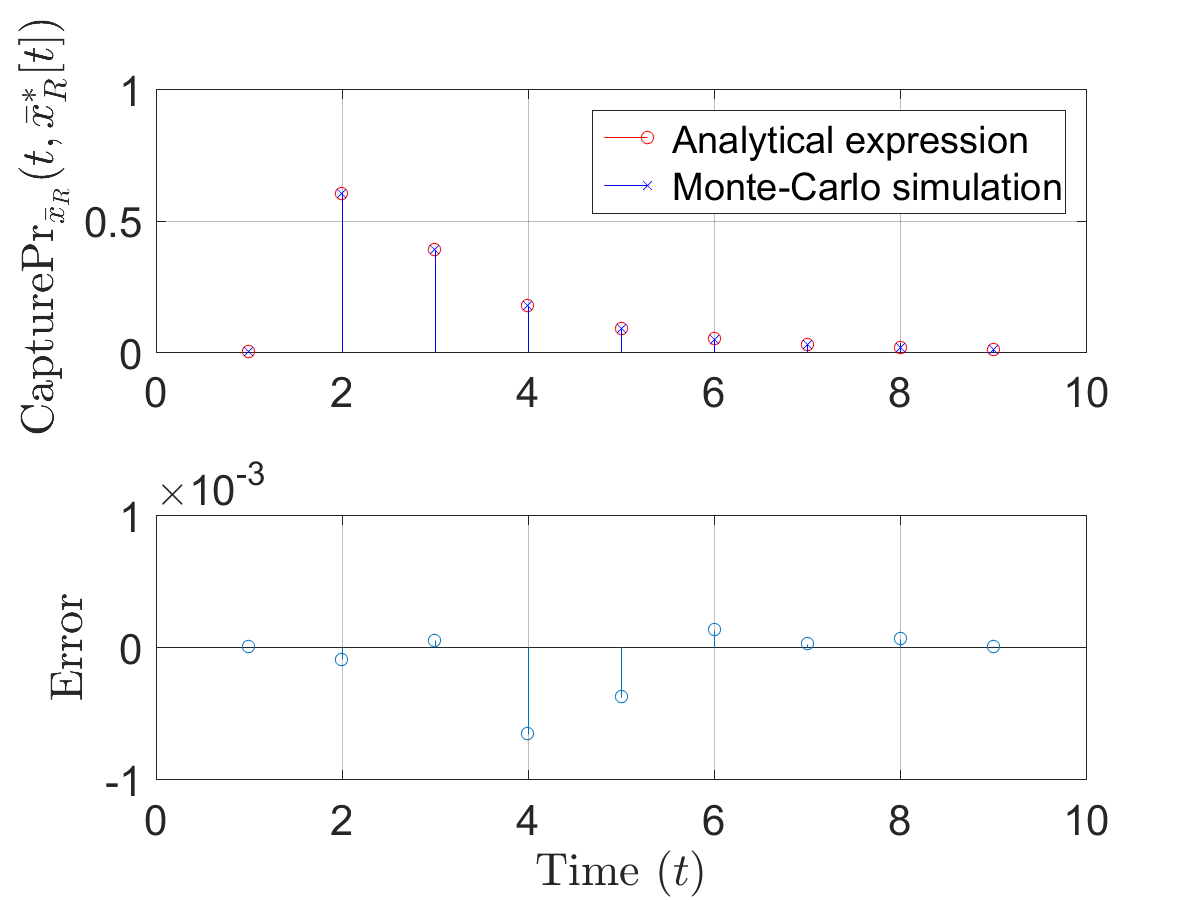} 
    \caption{Solution to Problem \probref{ProbC} for robot G dynamics in
        \eqref{eq:robotG_DI}, and validation of
        $\mathrm{CapturePr}_{\bar{x}_R}(\tau,\bar{x}_R^\ast[\tau];\bar{x}_G[0])$ via
        Monte-Carlo simulations. The optimal capture time is $\tau^\ast=2$ and the capture probability is
$\mathrm{CapturePr}_{\bar{x}_R}(\tau^\ast,\bar{x}_R^\ast[\tau^\ast];\bar{x}_G[0])=0.6044$.}\label{fig:SIM2_exp_time}
\end{figure}

We now consider a more complicated capture problem, in which the disturbance is
exponential (hence tracking the mean has little relevance because it is not the
mode, the global maxima of the density), and the robot dynamics are
more realistic.  We solve Problem \probref{ProbB} for the system given by
\eqref{eq:robotG_DI}.  Here, the disturbance set is $ \mathcal{W}=
\mathbb{R}^2_+$. Based on the mean of the stochastic acceleration $\ba[t]$, the
mean position of robot G has a parabolic trajectory due to the double integrator
dynamics, as opposed to the linear trajectory seen in
Subsection~\ref{sub:gauss_PM}. Also, in this case, we do not have an explicit
expression for the FSRPD like Proposition~\ref{prop:FSRPD_gauss}. Using
Theorem~\ref{thm:FSRPD_def}, we obtain an explicit expression for the CF of the
FSRPD. We utilize Lemma~\ref{lem:capturePr} to evaluate
$\mathrm{CapturePr}(\cdot)$.

Analogous to Lemma~\ref{lem:FSRset_gauss} and
Proposition~\ref{prop:FSRPD_gauss}, we characterize the FSR set in
Lemma~\ref{lem:FSRset_exp} and the FSRPD in Proposition~\ref{prop:FSRPD_exp}.
We use Lemma~\ref{lem:FSRset_oplus} to obtain an overapproximation of the FSR
set due to the unavailability of FSRPD to use \eqref{eq:FSRset}.

\begin{lem}\label{lem:FSRset_exp}
    For the system given in \eqref{eq:robotG_DI} with initial state
    $\bar{x}_G[0] \in \mathbb{R}^4$ of the robot G, we have
    $\FSRset_G(\tau,\bar{x}_G[0])\subseteq \{A_\mathrm{G,DI}^\tau\bar{x}_G[0]\}\oplus
    \mathbb{R}^4_+$ for every $2\leq \tau\leq T$, and\\
    $\FSRset_G(1,\bar{x}_G[0])\subseteq\{A_\mathrm{G,DI}\bar{x}_G[0]\}\oplus
    B_\mathrm{G,DI} \mathbb{R}^2_+$.
\end{lem}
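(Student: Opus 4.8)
The plan is to invoke Lemma~\ref{lem:FSRset_oplus} with disturbance set $\mathcal{W}=\mathbb{R}^2_+$ (which is closed, as the lemma requires) and then collapse the resulting Minkowski sum using the entrywise nonnegativity of the system matrices. For $\tau=1$ Lemma~\ref{lem:FSRset_oplus} reads $\FSRset_G(1,\bar{x}_G[0])\subseteq\{A_\mathrm{G,DI}\bar{x}_G[0]\}\oplus B_\mathrm{G,DI}\mathbb{R}^2_+$, which is exactly the claimed bound, so that case needs nothing further. For $2\le\tau\le T$, Lemma~\ref{lem:FSRset_oplus} gives $\FSRset_G(\tau,\bar{x}_G[0])\subseteq\{A_\mathrm{G,DI}^\tau\bar{x}_G[0]\}\oplus\mathscr{C}_{4\times(2\tau)}\mathbb{R}^{2\tau}_+$, and since $\mathscr{C}_{4\times(2\tau)}=[B_\mathrm{G,DI}\ A_\mathrm{G,DI}B_\mathrm{G,DI}\ \cdots\ A_\mathrm{G,DI}^{\tau-1}B_\mathrm{G,DI}]$ is block-partitioned, its image of $\mathbb{R}^{2\tau}_+$ is the Minkowski sum $\bigoplus_{k=0}^{\tau-1}A_\mathrm{G,DI}^kB_\mathrm{G,DI}\mathbb{R}^2_+$. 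It therefore suffices to show that this sum lies in $\mathbb{R}^4_+$.

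The key observation is the sign pattern: $A_\mathrm{G,DI}=I_2\otimes\left[\begin{smallmatrix}1&T_s\\0&1\end{smallmatrix}\right]$ and $B_\mathrm{G,DI}=I_2\otimes\left[\begin{smallmatrix}T_s^2/2\\T_s\end{smallmatrix}\right]$ have only nonnegative entries for $T_s>0$, hence so does every product $A_\mathrm{G,DI}^kB_\mathrm{G,DI}$ with $k\ge0$; explicitly $A_\mathrm{G,DI}^k=I_2\otimes\left[\begin{smallmatrix}1&kT_s\\0&1\end{smallmatrix}\right]$ gives $A_\mathrm{G,DI}^kB_\mathrm{G,DI}=I_2\otimes\left[\begin{smallmatrix}\tfrac{T_s^2}{2}+kT_s^2\\T_s\end{smallmatrix}\right]$. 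Consequently $A_\mathrm{G,DI}^kB_\mathrm{G,DI}\bar z\in\mathbb{R}^4_+$ for every $\bar z\in\mathbb{R}^2_+$, i.e.\ $A_\mathrm{G,DI}^kB_\mathrm{G,DI}\mathbb{R}^2_+\subseteq\mathbb{R}^4_+$; equivalently, the $4$-D plant is a pair of decoupled discretized double integrators, and a nonnegative acceleration produces only nonnegative position and velocity increments. Since $\mathbb{R}^4_+$ is a convex cone it is closed under Minkowski addition, so $\bigoplus_{k=0}^{\tau-1}A_\mathrm{G,DI}^kB_\mathrm{G,DI}\mathbb{R}^2_+\subseteq\mathbb{R}^4_+$, which yields the stated containment. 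One may alternatively argue by induction on $\tau$ using the recursive bound $\FSRset_G(\tau,\bar{x}_G[0])\subseteq A_\mathrm{G,DI}(\FSRset_G(\tau-1,\bar{x}_G[0]))\oplus B_\mathrm{G,DI}\mathbb{R}^2_+$ of Lemma~\ref{lem:FSRset_oplus}, together with $A_\mathrm{G,DI}\mathbb{R}^4_+\subseteq\mathbb{R}^4_+$ and $B_\mathrm{G,DI}\mathbb{R}^2_+\subseteq\mathbb{R}^4_+$.

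There is essentially no hard step: the content is the entrywise-nonnegativity check on $A_\mathrm{G,DI}^kB_\mathrm{G,DI}$ plus the triviality that a cone absorbs its own sums, so the only thing to watch is the Kronecker bookkeeping. It is worth noting why $\tau=1$ is singled out with the sharper set $B_\mathrm{G,DI}\mathbb{R}^2_+$: after a single step the reachable displacement from $A_\mathrm{G,DI}\bar{x}_G[0]$ is confined to the $2$-dimensional cone spanned by the columns of $B_\mathrm{G,DI}$ (position rigidly tied to velocity), so coarsening to the full orthant $\mathbb{R}^4_+$ would discard information, whereas for $\tau\ge2$ the orthant bound is already adequate for the convex-optimization arguments that invoke this lemma.
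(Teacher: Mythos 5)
Your proof is correct and follows essentially the same route as the paper's: apply Lemma~\ref{lem:FSRset_oplus} with $\mathcal{W}=\mathbb{R}^2_+$ and use the entrywise nonnegativity of $\mathscr{C}_{4\times(2\tau)}$ (equivalently, of each block $A_\mathrm{G,DI}^kB_\mathrm{G,DI}$) to get $\mathscr{C}_{4\times(2\tau)}\mathbb{R}^{2\tau}_+\subseteq\mathbb{R}^4_+$, with the $\tau=1$ case read off directly. The only difference is that the paper also invokes the rank-$4$ property of $\mathscr{C}_{4\times(2\tau)}$ to assert that the image actually \emph{equals} $\mathbb{R}^4_+$ for $\tau\geq 2$, whereas you establish only the containment --- which is all the lemma claims.
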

\begin{proof}
For the dynamics in \eqref{eq:robotG_DI},
$\mathscr{C}^\top_{4\times(2{\tau})} \mathbb{R}^{2{\tau}}_+= \mathbb{R}^4_+$ since the rank
of $\mathscr{C}^\top_{4\times(2{\tau})}$ is 4 for every ${\tau}\geq 2$, and elements of
$\mathscr{C}^\top_{4\times(2\tau)}$ are nonnegative. For ${\tau}=1$,
$\mathscr{C}^\top_{4\times(2\tau)}=B_\mathrm{G,DI}$. Lemma~\ref{lem:FSRset_oplus}
completes the proof.
\end{proof}
\ \\    
\begin{prop}\label{prop:FSRPD_exp}
    The CF of the FSRPD of the robot G for
    dynamics \eqref{eq:robotG_DI} is 
    \begin{align}
        \Psi_{\bx_G}(\bar{\beta};\tau,\bar{x}_G[0])&=\mathrm{exp}(j\bar{\beta}^\top
        (A^\tau_\mathrm{G,DI}\bar{x}_G[0]))\times \nonumber \\
        &\ \prod_{t=0}^{\tau-1}\frac{\lambda_\mathrm{ax}\lambda_\mathrm{ay}}{(\lambda_\mathrm{ax}-j\bar{\alpha}_{2t})(\lambda_\mathrm{ay}-j\bar{\alpha}_{2t+1})}\label{eq:FSRPD_linear_exp}
    \end{align}
    where $\bar{\alpha}=\mathscr{C}_{4\times (2\tau)}^\top\bar{\beta}\in
    \mathbb{R}^{(2\tau)}$ and $\bar{\beta}\in \mathbb{R}^{4}$.
    The FSRPD of the robot G is $\psi_{\bx_G}(\bar{x};\tau,\bar{x}_G[0])=
    \mathscr{F}^{-1}\big\{\Psi_{\bx_G}(\cdot;\tau,\bar{x}_G[0])\big\}(-\bar{x})$.
\end{prop}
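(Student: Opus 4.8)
The plan is to obtain~\eqref{eq:FSRPD_linear_exp} as a direct specialization of Theorem~\ref{thm:FSRPD_def} to the double integrator system~\eqref{eq:robotG_DI}. First I would identify the ingredients: $A = A_\mathrm{G,DI}$, $B = B_\mathrm{G,DI}$, state dimension $n = 4$, disturbance dimension $p = 2$, disturbance vector $\bw[t] = \ba[t]$, and concatenation matrix $\mathscr{C}_{4\times (2\tau)} = [B_\mathrm{G,DI}\ A_\mathrm{G,DI}B_\mathrm{G,DI}\ \cdots\ A_\mathrm{G,DI}^{\tau-1}B_\mathrm{G,DI}]$. Theorem~\ref{thm:FSRPD_def} then yields $\Psi_{\bx_G}(\bar{\beta};\tau,\bar{x}_G[0]) = \exp(j\bar{\beta}^\top(A_\mathrm{G,DI}^\tau\bar{x}_G[0]))\,\Psi_{\bW}(\mathscr{C}_{4\times (2\tau)}^\top\bar{\beta})$, where $\bW$ stacks $\ba[\tau-1],\ldots,\ba[0]$ and $\bar{\beta}\in\mathbb{R}^4$.

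Next I would compute $\Psi_{\bW}$. Since the acceleration process is IID across time, \eqref{eq:cfun_W} (an instance of Property~P3) gives $\Psi_{\bW}(\bar{\alpha}) = \prod_{t=0}^{\tau-1}\Psi_{\ba}(\bar{\alpha}_t)$, where I split $\bar{\alpha}\in\mathbb{R}^{2\tau}$ into $2$-vectors $\bar{\alpha}_t = [\bar{\alpha}_{2t}\ \bar{\alpha}_{2t+1}]^\top$ indexed by the block (Kronecker) structure of $A_\mathrm{G,DI}, B_\mathrm{G,DI}$ in~\eqref{eq:robotG_DI}, so that $\bar{\alpha}_{2t}$ is conjugate to the $x$-acceleration and $\bar{\alpha}_{2t+1}$ to the $y$-acceleration at step $t$. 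Substituting the exponential CF~\eqref{eq:cfun_exp} into each factor gives $\Psi_{\bW}(\bar{\alpha}) = \prod_{t=0}^{\tau-1}\frac{\lambda_\mathrm{ax}\lambda_\mathrm{ay}}{(\lambda_\mathrm{ax}-j\bar{\alpha}_{2t})(\lambda_\mathrm{ay}-j\bar{\alpha}_{2t+1})}$.

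Finally I would substitute $\bar{\alpha} = \mathscr{C}_{4\times (2\tau)}^\top\bar{\beta}$ into this product and multiply by the drift phase $\exp(j\bar{\beta}^\top(A_\mathrm{G,DI}^\tau\bar{x}_G[0]))$, which is exactly~\eqref{eq:FSRPD_linear_exp}; the density claim is then simply the inversion formula~\eqref{eq:cfun_psitau} applied to this CF, $\psi_{\bx_G}(\bar{x};\tau,\bar{x}_G[0]) = \mathscr{F}^{-1}\{\Psi_{\bx_G}(\cdot;\tau,\bar{x}_G[0])\}(-\bar{x})$.

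There is no genuinely hard step here; the mathematical content lives entirely in Theorem~\ref{thm:FSRPD_def} and~\eqref{eq:cfun_exp}. The only care needed is bookkeeping: keeping the block-versus-scalar indexing of $\mathscr{C}_{4\times (2\tau)}^\top\bar{\beta}$ straight (the $t$-th $2$-block carrying the scalar Fourier components $\bar{\alpha}_{2t},\bar{\alpha}_{2t+1}$), and observing that the within-step independence of $(\ba[t])_\mathrm{x}$ and $(\ba[t])_\mathrm{y}$ — already folded into~\eqref{eq:cfun_exp} — together with the across-step IID assumption is precisely what licenses the single product over $t=0,\ldots,\tau-1$. No convergence subtlety arises, since each factor is a bounded rational function of the Fourier variables.
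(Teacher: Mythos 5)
Your proposal is correct and follows exactly the route the paper takes: its proof of Proposition~\ref{prop:FSRPD_exp} is simply ``Apply Theorem~\ref{thm:FSRPD_def} to the dynamics \eqref{eq:robotG_DI},'' and your expansion via \eqref{eq:cfun_W}, \eqref{eq:cfun_exp}, and the inversion formula \eqref{eq:cfun_psitau} is just that application written out in detail, with the indexing bookkeeping made explicit.
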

\begin{proof}
    Apply Theorem~\ref{thm:FSRPD_def} to the dynamics
    \eqref{eq:robotG_DI}.
\end{proof}

To solve Problem~\probref{ProbB}, we define
$\mathrm{CapturePr}_{\bar{x}_R}(\cdot)$ as in \eqref{eq:intprob}. Since we are interested in just the position of robot G, we require only the
marginal density of the FSRPD over the position subspace of robot G,
$\psi_{\bx_G}^\mathrm{pos}$. By Property P4, we have for
$\bar{\gamma}=[\gamma_1\ \gamma_2]\in \mathbb{R}^2$,
\begin{align}
    \Psi_{\bx_G}^\mathrm{pos}(\bar{\gamma};\tau,\bar{x}_G[0])&=\Psi_{\bx_G}({[\gamma_1\
    0\ \gamma_2\ 0]}^\top;\tau,\bar{x}_G[0]).\label{eq:psi_pos_exp}
\end{align}
Unlike the case with Gaussian disturbance, explicit expressions for the FSRPD
$\psi_{\bx_G}$ or its marginal density $\psi_{\bx_G}^\mathrm{pos}$ are
unavailable since the Fourier transform \eqref{eq:FSRPD_linear_exp} is not
standard. 
\begin{lem}
$\psi_{B_\mathrm{G,DI}\ba},\ \psi_{\bx_G}\in L^1 ( \mathbb{R}^4) \cap L^2 (
    \mathbb{R}^4)$.\label{lem:FSRPD_L2}
\end{lem}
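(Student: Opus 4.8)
The plan is to reduce both claims to square-integrability of characteristic functions and then exploit the explicit product form supplied by Proposition~\ref{prop:FSRPD_exp}.

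The $L^1$ part is immediate: $\psi_{B_\mathrm{G,DI}\ba}$ and $\psi_{\bx_G}(\cdot;\tau,\bar{x}_G[0])$ are probability densities --- the latter being a genuine density on $\mathbb{R}^4$ because, as noted in the proof of Lemma~\ref{lem:FSRset_exp}, $\mathscr{C}_{4\times(2\tau)}$ has full row rank $4$ for every $\tau\ge 2$, so that $\bx_G[\tau]=A_\mathrm{G,DI}^\tau\bar{x}_G[0]+\mathscr{C}_{4\times(2\tau)}\bW$ is absolutely continuous --- and every probability density is nonnegative with unit integral, hence in $L^1$.

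For $L^2$ I would invoke Plancherel's theorem (the $h=\psi$ case of Property~P5 together with \eqref{eq:cfun_ift}): a density on $\mathbb{R}^4$ lies in $L^2$ iff its characteristic function does. By Proposition~\ref{prop:FSRPD_exp},
\begin{align}
  |\Psi_{\bx_G}(\bar{\beta};\tau,\bar{x}_G[0])|^2=\prod_{t=0}^{\tau-1}\frac{\lambda_\mathrm{ax}^2}{\lambda_\mathrm{ax}^2+\bar{\alpha}_{2t}^2}\cdot\frac{\lambda_\mathrm{ay}^2}{\lambda_\mathrm{ay}^2+\bar{\alpha}_{2t+1}^2},\qquad\bar{\alpha}=\mathscr{C}_{4\times(2\tau)}^\top\bar{\beta},\nonumber
\end{align}
and every factor is $\le 1$ as well as $\le c/(1+s^2)$ with $c=\max(1,\lambda_\mathrm{ax}^2,\lambda_\mathrm{ay}^2)$. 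Since $\mathscr{C}_{4\times(2\tau)}$ has rank $4$, choose four of its columns forming an invertible $M\in\mathbb{R}^{4\times4}$; keeping only the four corresponding factors (the remaining $2\tau-4$ are $\le 1$) and substituting $\bar{u}=M\bar{\beta}$ gives
\begin{align}
  \int_{\mathbb{R}^4}|\Psi_{\bx_G}(\bar{\beta};\tau,\bar{x}_G[0])|^2\,d\bar{\beta}\le c^4|\det M|^{-1}\Bigl(\int_{\mathbb{R}}\frac{du}{1+u^2}\Bigr)^{4}=c^4\pi^4|\det M|^{-1}<\infty,\nonumber
\end{align}
so $\Psi_{\bx_G}\in L^2(\mathbb{R}^4)$ and hence $\psi_{\bx_G}\in L^2(\mathbb{R}^4)$. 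The identical estimate applied to $\Psi_{B_\mathrm{G,DI}\ba}(\bar{\beta})=\Psi_{\ba}(B_\mathrm{G,DI}^\top\bar{\beta})$ (Property~P2 and \eqref{eq:cfun_exp}) yields square-integrability of $\psi_{B_\mathrm{G,DI}\ba}$ on the image of $B_\mathrm{G,DI}$, on which $B_\mathrm{G,DI}\ba$ is supported.

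The step I expect to be the main obstacle is not an estimate but the rank/degeneracy bookkeeping: one must verify that $\mathscr{C}_{4\times(2\tau)}$ genuinely attains rank $4$ for $\tau\ge2$ (so $\psi_{\bx_G}$ exists as an $L^1(\mathbb{R}^4)$ density and enough factors of $|\Psi_{\bx_G}|^2$ supply decay in every coordinate direction), and that the degenerate cases --- $\psi_{B_\mathrm{G,DI}\ba}$, whose characteristic function equals $1$ on the $2$-dimensional kernel of $B_\mathrm{G,DI}^\top$, and the $\tau=1$ step where $\mathscr{C}_{4\times2}=B_\mathrm{G,DI}$ has rank $2$ --- are read on the lower-dimensional subspaces actually carrying the distributions. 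Once full row rank is available, the exponential tails of $|\Psi|$ along every direction do the rest.
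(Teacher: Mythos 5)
Your proof is correct, but it runs through the Fourier domain whereas the paper stays entirely on the spatial side. The paper's argument is: (i) $\psi_{\ba}\in L^1(\mathbb{R}^2)\cap L^2(\mathbb{R}^2)$ by H\"older's inequality, and $\psi_{B_\mathrm{G,DI}\ba}$ inherits this because it factors as two Dirac deltas times a rescaled copy of $\psi_{\ba}$, so its $L^i$ norms equal $T_s^{2-2i}\Vert\psi_{\ba}\Vert_i$; (ii) $\psi_{\bx_G}$ is then handled by induction on the recursion \eqref{eq:recurs_FSRPD}, using the fact that $L^1\cap L^2$ is closed under convolution (Young's inequality). Your route instead takes the closed-form product CF of Proposition~\ref{prop:FSRPD_exp}, bounds each factor by $c/(1+s^2)$, performs the rank-$4$ change of variables through a nonsingular $4\times 4$ submatrix of $\mathscr{C}_{4\times(2\tau)}$, and concludes by Plancherel. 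What your version buys: it is non-iterative (consistent with the spirit of the paper), yields an explicit quantitative bound on $\Vert\Psi_{\bx_G}\Vert_2$ and hence on $\Vert\psi_{\bx_G}\Vert_2$, and makes transparent exactly which directions of $\mathbb{R}^4$ supply decay. What it costs: it leans on the converse Plancherel implication (CF in $L^2$ $\Rightarrow$ density in $L^2$), which is standard but is not literally Property P5 as stated (P5 presupposes $\psi\in L^2$), so you should cite the $L^2$ inversion theorem rather than P5; and it requires the rank bookkeeping you flag, which the paper's convolution induction avoids. Both proofs share the same informality about the degenerate cases ($\psi_{B_\mathrm{G,DI}\ba}$ and $\tau=1$, where the ``density'' contains deltas and the norms must be read on the carrying subspace), and you handle that at least as explicitly as the paper does, so I see no genuine gap.
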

\begin{proof}
    (\emph{For $\psi_{B_\mathrm{G,DI}\ba}$})
By H\"{o}lder's inequality\cite[Section 19]{billingsley_probability_1995},
$\psi_{\ba}\in L^1(\mathbb{R}^2) \cap L^2 (\mathbb{R}^2)$. We also have\\
$\psi_{B_\mathrm{G,DI}\ba}(z_1,z_2,z_3,z_4)=\delta(z_3-\frac{T_sz_4}{2})\delta(z_1-\frac{T_sz_2}{2})\psi_{\bz_{24}}(z_2,z_4)$
where $\bz_{24}=[z_2\ z_4]^\top=T_s\ba\in \mathbb{R}^2$ and
$\psi_{\bz_{24}}(z_2,z_4)=T_s^{-2}\psi_{\ba}(\frac{z_2}{T_s},\frac{z_4}{T_s})$
from \eqref{eq:probTransform}. For $i=\{1,2\}$, ${\Vert
\psi_{B_\mathrm{G,DI}\ba}\Vert}_i={\Vert
\psi_{\bz_{24}}\Vert}_i=T_s^{2-2i}{\Vert \psi_{\ba}\Vert}_i<\infty$ completing
the proof.
    
(\emph{For $\psi_{\bx_G}$})
Via induction using \eqref{eq:recurs_FSRPD} (similar to the proof of
Theorem~\ref{thm:FSRPD_log_concave}). Note that functions in $L^1 (
\mathbb{R}^4) \cap L^2( \mathbb{R}^4)$ are closed under
convolution~\cite[Theorem 1.3]{stein1971introduction}.
\end{proof}

\begin{lem}
    $\psi_{\bx_G}^\mathrm{pos}(\bar{x};\tau,\bar{x}_G[0])\in L^1 ( \mathbb{R}^2)
    \cap L^2 ( \mathbb{R}^2)$.\label{lem:marg_dens}
\end{lem}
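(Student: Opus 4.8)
The plan is to reduce the claim to a one-dimensional statement about convolutions of exponential densities, exploiting the block structure of the dynamics~\eqref{eq:robotG_DI}. Membership in $L^1(\mathbb{R}^2)$ is immediate: $\psi_{\bx_G}^\mathrm{pos}(\cdot;\tau,\bar{x}_G[0])$ is a marginal of the probability density $\psi_{\bx_G}(\cdot;\tau,\bar{x}_G[0])$, hence is itself a probability density on $\mathbb{R}^2$ with $\|\psi_{\bx_G}^\mathrm{pos}\|_1 = 1$. All the real work is in the $L^2$ part.

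For the $L^2$ claim I would first make the dependence of the position of robot G on the disturbance explicit. Unrolling~\eqref{eq:robotG_dyn_DI} (equivalently, using~\eqref{eq:lin_traj} with the Kronecker-structured $A_\mathrm{G,DI},B_\mathrm{G,DI}$), the two position coordinates of $\bx_G[\tau]$ satisfy
\begin{align}
    \begin{bmatrix} {(\bx_G[\tau])}_1 \\ {(\bx_G[\tau])}_3 \end{bmatrix}
    &= \bar{c}_\tau + \sum_{k=0}^{\tau-1} T_s^2\Big(\tau-k-\tfrac12\Big)
       \begin{bmatrix} {(\ba[k])}_\mathrm{x} \\ {(\ba[k])}_\mathrm{y} \end{bmatrix},\nonumber
\end{align}
where $\bar{c}_\tau\in\mathbb{R}^2$ depends only on $\bar{x}_G[0]$ and every coefficient $T_s^2(\tau-k-\tfrac12)$ is strictly positive for $0\le k\le\tau-1$. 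By~\eqref{eq:acc2} and the IID assumption on $\ba[\cdot]$, the families $\{{(\ba[k])}_\mathrm{x}\}_k$ and $\{{(\ba[k])}_\mathrm{y}\}_k$ are independent, so the first and third coordinates of $\bx_G[\tau]$ are independent and the marginal factors: $\psi_{\bx_G}^\mathrm{pos}(x_1,x_3;\tau,\bar{x}_G[0]) = g_\mathrm{x}(x_1)\,g_\mathrm{y}(x_3)$, with $g_\mathrm{x}$ (resp.\ $g_\mathrm{y}$) the density of the first (resp.\ second) entry of the display above.

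Each $g_\mathrm{x}$ is, up to a translation by the corresponding entry of $\bar{c}_\tau$ (which leaves every $L^d$-norm unchanged), the density of a sum of $\tau$ independent exponential random variables, i.e.\ a $\tau$-fold convolution $h_0\ast h_1\ast\cdots\ast h_{\tau-1}$ of scaled exponential densities. Each $h_k$ is a bounded probability density on $\mathbb{R}$, hence $h_k\in L^1(\mathbb{R})\cap L^2(\mathbb{R})$ (since $\|h_k\|_2^2\le\|h_k\|_\infty\|h_k\|_1<\infty$). Because $L^1(\mathbb{R})\cap L^2(\mathbb{R})$ is closed under convolution~\cite[Theorem 1.3]{stein1971introduction} (as already used in the proof of Lemma~\ref{lem:FSRPD_L2}), we get $g_\mathrm{x}\in L^1(\mathbb{R})\cap L^2(\mathbb{R})$, and likewise $g_\mathrm{y}$; by Tonelli, $\|\psi_{\bx_G}^\mathrm{pos}\|_2^2=\|g_\mathrm{x}\|_2^2\,\|g_\mathrm{y}\|_2^2<\infty$, which together with the first paragraph proves the lemma.

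The only step requiring care is the coefficient bookkeeping that yields the position decomposition; once the two position coordinates are identified as independent sums of positively scaled exponentials, the remainder is a routine convolution estimate. (A purely Fourier-analytic route also works: by~\eqref{eq:psi_pos_exp} and Proposition~\ref{prop:FSRPD_exp}, $|\Psi_{\bx_G}^\mathrm{pos}(\bar{\gamma};\tau,\bar{x}_G[0])|^2$ factors over $\gamma_1$ and $\gamma_2$, with each factor dominated by $\lambda_\mathrm{ax}^2/(\lambda_\mathrm{ax}^2+(T_s^2/2)^2\gamma_1^2)$, resp.\ its analogue in $\gamma_2$, which are integrable on $\mathbb{R}$; Plancherel's theorem and $\psi_{\bx_G}^\mathrm{pos}\in L^1(\mathbb{R}^2)$ then give $\psi_{\bx_G}^\mathrm{pos}\in L^2(\mathbb{R}^2)$.)
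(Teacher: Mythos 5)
Your proof is correct, but it takes a genuinely different route from the paper's. The paper's proof is a one-liner: it invokes the CF slicing relation \eqref{eq:psi_pos_exp} to assert $\Vert\psi_{\bx_G}^\mathrm{pos}\Vert_i=\Vert\psi_{\bx_G}\Vert_i$ for $i=1,2$ and then appeals to Lemma~\ref{lem:FSRPD_L2}. You instead decompose the position of robot G explicitly as a translate of $\sum_{k=0}^{\tau-1}T_s^2(\tau-k-\tfrac12)\,\ba[k]$, observe that the two position coordinates are independent sums of positively scaled exponentials, and run a Young's-inequality/convolution argument coordinate by coordinate. Your coefficient bookkeeping checks out ($A^mB$ has position entry $T_s^2(m+\tfrac12)$, strictly positive), and each step (boundedness of scaled exponential densities, closure of $L^1\cap L^2$ under convolution, Tonelli for the product) is sound. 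The extra work buys you something real: norm equality between a joint density and its marginal holds trivially for $L^1$ (both integrate to one) but is false in general for $L^2$ (for a product density $f(x)g(y)$ the joint $L^2$ norm is $\Vert f\Vert_2\Vert g\Vert_2$, not $\Vert f\Vert_2$), so the paper's one-line argument does not actually establish the $L^2$ half of the claim, whereas yours does. Your approach also sidesteps a wrinkle in Lemma~\ref{lem:FSRPD_L2} itself, namely that $\psi_{B_\mathrm{G,DI}\ba}$ involves Dirac deltas and the full four-dimensional state density is only a genuine $L^2$ function for $\tau\ge 2$, by working directly with the two-dimensional position marginal. Your parenthetical Fourier-analytic alternative (bounding each factor of $\vert\Psi_{\bx_G}^\mathrm{pos}\vert^2$ by $\lambda_\mathrm{ax}^2/(\lambda_\mathrm{ax}^2+(T_s^2/2)^2\gamma_1^2)$ and its $\gamma_2$ analogue, then applying Plancherel together with $\psi_{\bx_G}^\mathrm{pos}\in L^1$) is the repair closest in spirit to what the paper seems to have intended, and is probably the cleanest version to record.
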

\begin{proof}
    For $i=\{1,2\}$, we have from \eqref{eq:psi_pos_exp}, ${\Vert
    \psi_{\bx_G}^\mathrm{pos} \Vert}_i={\Vert \psi_{\bx_G}\Vert}_i$, and from
    Lemma~\ref{lem:FSRPD_L2}, ${\Vert \psi_{\bx_G}\Vert}_i <\infty$.
\end{proof}

Similar to Subsection~\ref{sub:gauss_PM}, we define a convex capture region
$\mathrm{CaptureSet}(\bar{y}_R)=\mathrm{Box}(\bar{y}_R,a)\subseteq \mathbb{R}^2$
where $\bar{y}_R\in \mathbb{R}^2$ is the state of the robot R.  We define
$h(\bar{y};\bar{y}_R,a)=\mathbf 1_{\mathrm{Box}(\bar{y}_R,a)}(\bar{y})$ as the
indicator function corresponding to a $2$-D box centered at $\bar{y}_R$ with
edge length $2a>0$ with $h(\bar{y})=1$ if $\bar{y}\in
\mathrm{CaptureSet}(\bar{y}_R)$ and zero otherwise. The Fourier transform of $h$
is a product of \emph{sinc} functions shifted by $\bar{y}_R$ (follows from
Property P2 and~\cite[Chapter 13]{bracewell_fourier_1986})
\begin{align}
    H(\bar{\gamma};\bar{y}_R,a)&=\mathscr{F}\{h(\cdot;\bar{y}_R,a)\}(\bar{\gamma}) \nonumber \\
    &=4a^2\exp{(-j\bar{y}_R^\top\bar{\gamma})}\frac{\sin(a
\gamma_1)\sin(a \gamma_2)}{\gamma_1\gamma_2}.\label{eq:FT_box}
\end{align}
Clearly, $h$ is square-integrable, and from Lemmas~\ref{lem:capturePr}
and~\ref{lem:marg_dens}, we define
$\mathrm{CapturePr}_{\bar{x}_R}(\cdot)$ in \eqref{eq:capturePr_DI}. 
Equation \eqref{eq:capturePr_DI} is
evaluated using \eqref{eq:FSRPD_linear_exp}, \eqref{eq:psi_pos_exp}, and \eqref{eq:FT_box}. We use \eqref{eq:capturePr_DI} as opposed
\eqref{eq:capturePr_DI_density} due to the unavailability of an explicit expression
for $\psi_{\bx_G}^\mathrm{pos}$. The numerical evaluation of the inverse Fourier
transform of $\Psi_{\bx_G}^\mathrm{pos}$ to compute \eqref{eq:capturePr_DI_density}
will require two quadratures, resulting in a higher approximation error as compared
to \eqref{eq:capturePr_DI}.

\begin{figure*}
\begin{align}
    \mathrm{CapturePr}_{\bar{x}_R}(\tau,\bar{x}_R[\tau];\bar{x}_G[0])&=\int_{ \mathbb{R}^2}
\psi_{\bx_G}^\mathrm{pos}(\bar{x};\tau,\bar{x}_G[0])h(\bar{x};\bar{x}_R[\tau],a)d\bar{x}
\label{eq:capturePr_DI_density}\\
&={\left(\frac{1}{2\pi}\right)}^2\int_{ \mathbb{R}^2}
\Psi_{\bx_G}^\mathrm{pos}(\bar{\gamma};\tau,\bar{x}_G[0])H(\bar{\gamma};\bar{x}_R[\tau],a)d\bar{\gamma}.\label{eq:capturePr_DI}
\end{align}\rule{\textwidth}{0.5pt}
\end{figure*}

We implement the problem with the following parameters: $T_s=0.2$, $T=9$,
$a=0.25$, $\lambda_\mathrm{ax}=0.25$, $\lambda_\mathrm{ay}=0.45$, $\bar{x}_G[0]=
{[1.5\ 0\ -0.5\ 2]}^\top$, $\bar{x}_R[0]=[2.5\ 0]^\top$, and $
\mathcal{U}=[-1.5,1.5]\times[1,4]$. 
We use $J_\pi(\bar{\pi})=0$ in Problem~\probref{ProbD}.

Figure~\ref{fig:SIM2_exp_snapshot} shows the evolution of the mean position of
the robot G and the optimal capture position for the robot R at time instants
$1,2,3,6,$ and $9$. For every $\tau\in[1,T]$, the contour plots of
$\psi_{\bx_G}^\mathrm{pos}(\cdot;\tau,\bar{x}_G[0])$ were estimated via Monte-Carlo
simulation 
since evaluating $\psi_{\bx_G}^\mathrm{pos}(\cdot;\tau,\bar{x}_G[0])$ via
\eqref{eq:cfun_ift} over a grid is computationally expensive. Note that the mean
position of the robot G does not coincide with the mode of
$\psi_{\bx_G}^\mathrm{pos}(\cdot;\tau,\bar{x}_G[0])$ in contrast to the problem
discussed in Subsection~\ref{sub:gauss_PM}.  The optimal time of capture is at
$\tau^\ast=2$, the optimal capture position is $\bar{x}_R^\ast[\tau^\ast]=[1.9\
0.55]^\top$, and the corresponding probability of robot R capturing robot G is
$0.6044$ (Figure~\ref{fig:SIM2_exp_snapshot}b).
Figure~\ref{fig:SIM2_exp_time} shows the optimal capture probabilities obtained
when solving Problem \probref{ProbC} for the dynamics~\eqref{eq:robotG_DI}, and
the validation of the results. 

\subsection{Numerical implementation and analysis}
\label{sub:numImp}

All computations in this paper were performed using MATLAB on an Intel Core i7
CPU with 3.4GHz clock rate and 16 GB RAM. The MATLAB code for this work is
available at \url{http://hscl.unm.edu/files/code/HSCC17.zip}.

We solved Problem~\probref{ProbC} using MATLAB's built-in functions ---
\emph{fmincon} for the optimization, \emph{mvncdf} to compute the objective
\eqref{eq:intprob} for the case in Subsection~\ref{sub:gauss_PM},
\emph{integral} to compute the objective \eqref{eq:capturePr_DI} for the case in
Subsection~\ref{sub:exp_DI}, and \emph{max} to compute the global optimum of
Problem~\probref{ProbB}. In both the sections, we used MPT for the reachable set
calculation and solved Problem~\probref{ProbD} using CVX~\cite{cvx}. Using
Lemma~\ref{lem:FSRmembership}, the FSR sets 
restrict the
search while solving Problem ProbC. All geometric computations were done in
the facet representation. We computed the initial guess for the optimization of
Problem~\probref{ProbC} by performing Euclidean projection of the mean to the feasible set
using CVX~\cite[Section 8.1.1]{boyd_convex_2004}. Since computing the objective was costly, this operation saved
significant computational time.  The Monte-Carlo simulation 
used $500,000$ particles. No offline computations were
done in either of the cases.

The overall computation of Problem \probref{ProbB} and \probref{ProbD} for the
case in Subsection~\ref{sub:gauss_PM} took $5.32$ seconds for $T=20$. Since
Proposition~\ref{prop:FSRPD_gauss} provides explicit expressions for the FSRPD,
the evaluation of the FSRPD for any given point $\bar{y}\in \mathcal{X}$ takes
$1.6$ millseconds on average. For the case in Subsection~\ref{sub:exp_DI}, the
overall computation took $488.55$ seconds ($\sim 8$ minutes) for $T=9$. The
numerical evaluation of the improper integral \eqref{eq:capturePr_DI} is the
major cause of increase in runtime.  The evaluation of the FSRPD for any given
point $\bar{y}\in \mathcal{X}$ using \eqref{eq:cfun_ift} takes about $10.5$
seconds, and the runtime and the accuracy depend heavily on the point $\bar{y}$
as well as the bounds used for the integral approximation. However, the
evaluation of $\mathrm{CapturePr}_{\bar{x}_R}(\cdot)$ using \eqref{eq:capturePr_DI} is much
faster ($0.81$ seconds) because $H(\bar{\gamma};\bar{y}_R,a)$ is a decaying, 2-D
sinc function (decaying much faster than the CF).  

The decaying properties of the integrand in \eqref{eq:capturePr_DI} and CFs in
general permits approximating the improper integrals in \eqref{eq:cfun_ift}
and \eqref{eq:capturePr_DI} by as a proper integral with suitably defined finite
bounds. The tradeoff between accuracy and computational speed, common in
quadrature techniques, dictates the choice of the bound. A detailed analysis of
various quadrature techniques, their computational complexity, and their error
analysis can be found in~\cite[Chapter 4]{press2007numerical}.

\section{Conclusions and Future work}
\label{sec:conc}

This paper provides a method for forward stochastic reachability analysis using
Fourier transforms. The method is applicable to uncontrolled stochastic linear
systems. Fourier transforms simplify the computation and mitigate the curse of
dimensionality associated with gridding the state space. We also analyze several
convexity results associated with the FSRPD and FSR sets. We demonstrate our
method on the problem of controller synthesis for a controlled robot pursuing a stochastically moving non-adversarial target.

Future work includes exploration of various quadrature techniques like particle
filters for high-dimensional quadratures, and extension to a model predictive
control framework and to discrete random vectors (countable disturbance sets).
Multiple pursuer applications will also be investigated.

\section{Acknowledgements}
\label{sec:acknowledgements}

The authors thank Prof. M. Hayat for discussions on Fourier transforms in
probability theory and the reviewers for their insightful comments. 
 
This material is based upon work supported by the National Science Foundation,
under Grant Numbers CMMI-1254990, CNS-1329878, and IIS-1528047.
Any opinions, findings, and conclusions or recommendations expressed in this
material are those of the authors and do not necessarily reflect the views of
the National Science Foundation.


\bibliographystyle{unsrt}
\bibliography{IEEEabrv,shortIEEE,obstacleavoid}
\end{document}